\documentclass[journal]{IEEEtran}
\usepackage{amsmath,amssymb,amsfonts,amsthm}
\usepackage{graphicx}
\usepackage{algorithm}
\usepackage{algpseudocode}
\usepackage{booktabs}
\usepackage{multirow}
\usepackage{url}
\usepackage{cite}
\usepackage[colorlinks=true,linkcolor=blue,citecolor=blue,urlcolor=blue]{hyperref}

\newtheorem{theorem}{Theorem}
\newtheorem{proposition}[theorem]{Proposition}
\newtheorem{lemma}[theorem]{Lemma}
\newtheorem{corollary}[theorem]{Corollary}
\theoremstyle{definition}
\newtheorem{definition}{Definition}

\newtheorem{remark}{Remark}

\newcommand{\Q}{\mathbb{Q}}
\newcommand{\Z}{\mathbb{Z}}
\newcommand{\C}{\mathbb{C}}
\newcommand{\Gal}{\mathrm{Gal}}
\newcommand{\inv}{\mathrm{inv}}
\newcommand{\nrd}{\mathrm{nrd}}
\newcommand{\OO}{\mathcal{O}}
\newcommand{\Br}{\mathrm{Br}}
\newcommand{\cA}{\mathcal{A}}
\newcommand{\bX}{\mathbf{X}}
\newcommand{\bH}{\mathbf{H}}
\newcommand{\bY}{\mathbf{Y}}
\newcommand{\bW}{\mathbf{W}}
\newcommand{\bG}{\mathbf{G}}
\newcommand{\bR}{\mathbf{R}}
\newcommand{\bE}{\mathbf{E}}

\newcommand{\bD}{\mathbf{D}}
\newcommand{\bs}{\mathbf{s}}
\newcommand{\zt}{\zeta}

\begin{document}
\title{Non-Vanishing Determinant Space--Time Codes from Abelian Crossed Product Algebras}

\author{Farhad~Barzegar, Dariush~Kiani, and Hassan~Khodaiemehr%
\thanks{F. Barzegar and D. Kiani are with the Department of Mathematics and Computer Science, Amirkabir University of Technology (Tehran Polytechnic), Tehran, Iran (e-mails: fbarzegar@aut.ac.ir, dkiani@aut.ac.ir).}%
\thanks{H. Khodaiemehr is with the School of Engineering, Faculty of Applied Science, The University of British Columbia (Okanagan Campus), Kelowna, BC, Canada (e-mail: hassan.khodaiemehr@ubc.ca).}%
\thanks{Part of this work was presented at the 52nd Annual Iranian Mathematics Conference \cite{Barzegar}.}}

\maketitle

\begin{abstract}
Every explicit non-cyclic crossed product space--time code for more than four antennas has relied on transcendental structure constants, and its coding gain is therefore not bounded away from zero. We explain this phenomenon and show how to avoid it. Over a number field, the commuting-generator (tensor product) construction yields a division algebra only in the cyclic case, and a division crossed product of prime-power degree requires two places with full decomposition group; in particular, no $(\Z/2)^r$, $r\ge3$, crossed product division algebra exists over $\Q(i)$. We then introduce Abelian crossed products with a commutator twist, give the general parametric form of their codewords as a $2\times2$ block matrix of cyclic-algebra codewords, and prove the division property with a centralizer argument and a tame norm-residue symbol. This yields algebraic codes over $\Q(i)$ with non-vanishing determinants and closed-form minimum determinants: an information-lossless $4\times4$ $(\Z/2)^2$ code with $\delta_{\min}=1/2000$, and an infinite family of DMT-optimal $8\times8$ $\Z/4\times\Z/2$ codes indexed by primes $p\equiv5\pmod8$, whose commutator and cyclic constants are roots of unity. The member $p=5$, defined over $\Q(\zeta_{40})$, has minimum determinant exactly $1/5$ before normalization, at least $34.7$ times that of the $8\times8$ perfect code at equal energy, and its rate-2 subcode for $8\times2$ systems has at least $5$ times the minimum determinant of the punctured perfect code. A determinant-spectrum analysis shows that at practical error rates the multiplicity of small determinants, rather than their minimum, governs performance, and it explains all simulated results. Under exact maximum-likelihood decoding the $4\times4$ code gains $0.4$--$0.5$~dB over the perfect code in i.i.d., correlated and imperfect-CSI fading, while the transcendental code falls behind at high SNR; an information-lossless $8\times8$ variant outperforms the $8\times8$ perfect code by up to $0.7$~dB under near-ML decoding, and the minimum-determinant advantage of the non-unitary $8\times8$ code remains asymptotic.
\end{abstract}

\begin{IEEEkeywords}
Space--time block codes, crossed product algebras, division algebras, Brauer group, non-vanishing determinant, perfect codes, sphere decoding.
\end{IEEEkeywords}

\section{Introduction}
\IEEEPARstart{A}{lgebraic} space--time block codes (STBCs) obtained from division algebras are fully diverse by construction. When the structure constants are algebraic integers, or have bounded denominators, the codes also have the non-vanishing determinant (NVD) property, which is sufficient for optimality with respect to the diversity--multiplexing gain tradeoff (DMT) \cite{Elia2006,Tavildar2006}. Cyclic division algebras produced the Golden code \cite{Golden} and the perfect codes \cite{Oggier2006,Elia2007}, which are full-rate, fully diverse and NVD, with uniform average energy and cubic shaping. A perfect code built from a cyclic algebra whose non-norm element is a unit of $\Z[i]$ or $\Z[\zt_3]$, however, exists only for $n\in\{2,3,4,6\}$ transmit antennas \cite{BerhuyOggier2009}. For other values of $n$, Elia \emph{et al.} \cite{Elia2007} use the non-integral non-norm element $\gamma=a/\bar a$, which preserves the NVD property but lowers the guaranteed minimum determinant by the factor $|a|^{2(n-1)}$ (for instance, $5^{7}$ when $n=8$).

Other non-commutative structures have also been used: twisted Laurent series rings, for instance, yield high-rate STBCs with full diversity \cite{KhodaiemehrKiani2015}. Crossed product algebras with an arbitrary Galois group $G$ generalize cyclic algebras and were proposed for STBCs in \cite{Shashidhar2006}. Degree-four crossed products with $G\cong(\Z/2)^2$ and algebraic constants were studied in \cite{BerhuyOggier2007,LuzziOggier2011,OggierVehkalahtiHollanti2010}, mainly as a route to fast-decodable multiple-input double-output (MIDO) codes. For larger degrees, the constructions in \cite{Shashidhar2006}, the biquaternion codes of \cite{UngerMarkin2011} and an earlier version of the present work \cite{Barzegar} all rely on \emph{transcendental} structure constants. Transcendental constants make the division property easy to establish through the theory of generic Abelian crossed products \cite{Amitsur1978,Kuyk1972}, but the determinants of the infinite code then lie in a dense subring of $\C$, and no positive lower bound on the coding gain follows. As observed in \cite[Sec.~III-C]{Oggier2006}, this approach "yields a vanishing minimum determinant when the constellation size increases". The natural question, raised by the reviewers of our earlier submission, is whether a crossed product code can improve on the known cyclic codes at all.

This paper answers that question in three steps. We first identify, by theorems, what prevents algebraic non-cyclic constructions and which Galois groups remain admissible over $\Q(i)$. We then introduce a construction that avoids these obstructions, derive the general parametric form of its codewords and a computable division criterion, and build explicit codes for four and eight antennas. Finally, we evaluate these codes against the perfect codes, both through an exact analysis of their determinant spectra and through simulations.

\subsection{Contributions}\label{sec:contrib}
\begin{enumerate}
\item \emph{Obstructions (Theorems~\ref{thm:commuting} and \ref{thm:local}, Corollary~\ref{cor:Z23}).} Over a number field, the commuting-generator construction $U_i^{n_i}=\gamma_i$, $U_iU_j=U_jU_i$ with algebraic $\gamma_i$ never produces a non-cyclic division algebra. A division crossed product of prime-power degree needs two places with full decomposition group, and $(\Z/2)^r$-crossed product division algebras with $r\ge3$ do not exist over $\Q(i)$. These results explain why earlier non-cyclic constructions needed transcendental constants, and they rule out every algebraic version of the $(\Z/2)^3$ example in \cite{Barzegar}.
\item \emph{A twisted construction with an explicit codeword form (Proposition~\ref{prop:valid}, Theorem~\ref{thm:block}).} For $G=\Z/m\times\Z/2$ we give the general codeword matrix of crossed products with a commutator twist $fe=efu$: a $2\times2$ block matrix whose blocks are cyclic-algebra codewords. Its structure yields an encoder with explicit operation counts (Algorithm~\ref{alg:enc}).
\item \emph{A computable division criterion (Theorem~\ref{thm:centralizer}, Lemma~\ref{lem:tame})} that reduces the division property to the order of an element of a finite field.
\item \emph{Explicit NVD codes with closed-form minimum determinants.} The code $\mathcal{C}_4$ (Section~\ref{sec:C4}) is an information-lossless $(\Z/2)^2$ code over $\Q(i)$ with unit-modulus constants and $\delta_{\min}=1/2000$. The code $\mathcal{C}_8$ (Section~\ref{sec:C8}) is a $\Z/4\times\Z/2$ code over $\Q(\zt_{40})$ with root-of-unity commutator and cyclic constants. Its minimum determinant is $1/5$ before normalization, at least $34.7$ times that of the $8\times8$ perfect code \cite{Elia2007} at equal energy.
\item \emph{A family, puncturing and DMT optimality (Theorem~\ref{thm:family}, Proposition~\ref{prop:punct}, Corollary~\ref{cor:dmt}).} $\mathcal{C}_8$ is the optimal member of an infinite family of NVD, DMT-optimal $8\times8$ codes indexed by primes $p\equiv5\pmod8$. Its rate-2 subcode for $8\times2$ systems has a provably $5$ times larger minimum determinant than the punctured perfect code.
\item \emph{Performance analysis (Section~\ref{sec:spectrum}) and evaluation (Section~\ref{sec:sim}).} An exact low-weight determinant-spectrum and union-bound analysis explains why $\mathcal{C}_4$ outperforms the perfect code at practical error rates despite its smaller minimum determinant, why transcendental codes fail only at high SNR, and why the larger minimum determinant of $\mathcal{C}_8$ is an asymptotic advantage. Exact maximum-likelihood (ML) simulations with visited-node complexity, in i.i.d.\ and spatially correlated Rayleigh fading, confirm these predictions.
\end{enumerate}
We state explicitly what the constructions do \emph{not} achieve. The asymptotic coding gain of $\mathcal{C}_4$ is $0.62$~dB below that of the $4\times4$ perfect code, although $\mathcal{C}_4$ performs better at the simulated error rates. The larger minimum determinant of $\mathcal{C}_8$ is obtained with non-cubic shaping and does not lower the block error rate (BLER) at the simulated operating points; the gain at practical error rates is achieved instead by its unitary variant $\mathcal{C}_8^{\mathrm u}$. Finally, the algebraic structure simplifies encoding and analysis but does not reduce the worst-case complexity of ML decoding.

\subsection{Organization and notation}
Section~\ref{sec:model} introduces the system model and the normalizations used in all comparisons. Section~\ref{sec:prelim} collects the algebraic background, and Section~\ref{sec:obstruction} proves the obstruction results and the division criterion. Section~\ref{sec:construction} presents the twisted construction, the parametric form of its codewords and the encoder. Sections~\ref{sec:C4} and \ref{sec:C8} construct the $4\times4$ and $8\times8$ codes. Section~\ref{sec:spectrum} treats decoding and analyzes the determinant spectra, Section~\ref{sec:sim} reports the simulations, and Section~\ref{conclusion} concludes.

For a field extension $L/K$, $N_{L/K}$ denotes the norm and $\OO_L$ the ring of integers; $\zt_m=e^{2\pi i/m}$. For a central simple $K$-algebra $\cA$, $\nrd$ is the reduced norm and $[\cA]\in\Br(K)$ the Brauer class. For a place $v$ of a number field $K$, $K_v$ is the completion and $\inv_v:\Br(K_v)\to\Q/\Z$ the Hasse invariant. The symbol $\otimes$ denotes both the Kronecker product of matrices and the tensor product of algebras.

\section{System Model and Design Criteria}\label{sec:model}
We consider a quasi-static flat-fading MIMO channel with $n$ transmit antennas, $N_r$ receive antennas and a block length of $T=n$ channel uses,
\begin{equation}\label{eq:channel}
\bY=\sqrt{\rho/n}\,\bH\bX+\bW,
\end{equation}
where $\bH\in\C^{N_r\times n}$ and $\bW\in\C^{N_r\times n}$ have i.i.d.\ $\mathcal{CN}(0,1)$ entries, and $\rho$ is the average SNR per receive antenna. Spatially correlated channels are considered in Section~\ref{sec:sim}. Codewords are normalized to $\mathbb{E}\|\bX\|_F^2=n^2$.

A \emph{linear} code is a set $\{\bX(\bs)=\sum_{k=1}^{n^2}s_k\bX_k:\ \bs\in\mathcal{S}^{n^2}\}$ with dispersion matrices $\bX_k$ and QAM symbols $s_k$; equivalently, $\mathrm{vec}(\bX)=\bG\bs$ for a generator matrix $\bG\in\C^{n^2\times n^2}$. Differences of codewords of a linear code are codewords of the infinite code $\mathcal{C}_\infty=\{\bX(\bs):\bs\in\Z[i]^{n^2}\}$, and the rank and determinant criteria \cite{Tarokh1998} therefore concern
\begin{equation}
\delta_{\min}(\mathcal{C}_\infty)=\inf_{\bs\in\Z[i]^{n^2}\setminus\{0\}}|\det\bX(\bs)|^2,
\end{equation}
where the zero codeword is excluded. The code has the NVD property if $\delta_{\min}(\mathcal{C}_\infty)>0$.

\emph{Fair normalization.} Because $\delta_{\min}$ depends on scaling, codes are compared only after the energy normalization $\mathbb{E}\|\bX\|_F^2=n^2E_s$ for i.i.d.\ zero-mean symbols of energy $E_s$, that is, after scaling the dispersion matrices by $c^{1/2}$ with $c=n^2/\|\bG\|_F^2$. We write
\begin{equation}\label{eq:deltaE}
\delta_E=c^{\,n}\,\delta_{\min}(\mathcal{C}_\infty),
\end{equation}
which is the quantity entering the pairwise error probability bound $P(\bX\to\bX')\le(\rho/4n)^{-nN_r}|\det(\bX-\bX')|^{-2N_r}$. When $\bG$ is unitary, $c=1$ and $\delta_E=\delta_{\min}$; this is the convention of \cite{Oggier2006}, in which the $4\times 4$ perfect code has $\delta_{\min}=1/1125$. Two codes with the same $n$ and rate differ in asymptotic coding gain by $10\log_{10}(\delta_E/\delta_E')/n$~dB.

\begin{definition}[Information losslessness \cite{HassibiHochwald2002,Oggier2006}]\label{def:IL}
A linear code with generator $\bG$ is information-lossless if, for Gaussian inputs $\bs$, the mutual information of the equivalent channel $\bH_{\mathrm{eq}}=(\mathbf{I}_n\otimes\bH)\bG$ equals the MIMO capacity, i.e., $\log\det(\mathbf{I}+\tfrac{\rho}{n}\bH_{\mathrm{eq}}^H\bH_{\mathrm{eq}})=n\log\det(\mathbf{I}+\tfrac{\rho}{n}\bH\bH^H)$ for all $\bH$.
\end{definition}
\begin{proposition}\label{prop:IL}
If $\bG$ is unitary up to a scalar factor, the code is information-lossless.
\end{proposition}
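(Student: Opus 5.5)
The plan is to compute the mutual information of the equivalent channel directly and to reduce the determinant to the MIMO capacity expression with two applications of Sylvester's determinant identity $\det(\mathbf{I}+\mathbf{A}\mathbf{B})=\det(\mathbf{I}+\mathbf{B}\mathbf{A})$. First I would justify the equivalent channel. With column-stacking, $\mathrm{vec}(\bH\bX)=(\mathbf{I}_n\otimes\bH)\,\mathrm{vec}(\bX)=(\mathbf{I}_n\otimes\bH)\bG\bs$. So \eqref{eq:channel} becomes the vector channel $\mathrm{vec}(\bY)=\sqrt{\rho/n}\,\bH_{\mathrm{eq}}\bs+\mathrm{vec}(\bW)$, with $\bH_{\mathrm{eq}}=(\mathbf{I}_n\otimes\bH)\bG$, which is the matrix appearing in Definition~\ref{def:IL}.

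Next I would fix the scalar. Write $\bG=a\mathbf{U}$ with $\mathbf{U}$ unitary, so that $\bG\bG^H=|a|^2\mathbf{I}_{n^2}$. The energy normalization $\mathbb{E}\|\bX\|_F^2=n^2$ of Section~\ref{sec:model} with unit-energy symbols gives $\|\bG\|_F^2=n^2$, and hence $|a|=1$. This is the same as saying that the fair normalization has $c=1$. If one prefers to keep a general scalar, the argument below goes through verbatim with $\rho$ replaced by $|a|^2\rho$, and this factor is exactly what the normalization removes. I expect this bookkeeping about the scalar to be the only real subtlety. The statement ``up to a scalar factor'' must be read together with the normalization convention, otherwise the identity in Definition~\ref{def:IL} is off by an SNR rescaling.

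The main computation comes next. Set $\mathbf{B}=\mathbf{I}_n\otimes\bH^H\bH$. Then $\bH_{\mathrm{eq}}^H\bH_{\mathrm{eq}}=\bG^H\mathbf{B}\bG$. By Sylvester,
\begin{equation*}
\det\!\Big(\mathbf{I}+\tfrac{\rho}{n}\bG^H\mathbf{B}\bG\Big)=\det\!\Big(\mathbf{I}+\tfrac{\rho}{n}\mathbf{B}\bG\bG^H\Big)=\det\!\Big(\mathbf{I}_{n^2}+\tfrac{\rho}{n}\,\mathbf{I}_n\otimes\bH^H\bH\Big).
\end{equation*}
The last matrix is block diagonal, $\mathbf{I}_n\otimes(\mathbf{I}_n+\tfrac{\rho}{n}\bH^H\bH)$, so its determinant is $\det(\mathbf{I}_n+\tfrac{\rho}{n}\bH^H\bH)^n$. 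A second application of Sylvester gives $\det(\mathbf{I}_n+\tfrac{\rho}{n}\bH^H\bH)=\det(\mathbf{I}_{N_r}+\tfrac{\rho}{n}\bH\bH^H)$. Taking logarithms yields the identity of Definition~\ref{def:IL} for every $\bH$.

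Finally, I would note that for Gaussian $\bs\sim\mathcal{CN}(0,\mathbf{I})$ the mutual information of the vector channel is exactly this log-determinant. Hence no information is lost, and the code is information-lossless.
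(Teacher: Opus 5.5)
Your proof is correct and follows the paper's argument: Sylvester's identity together with $\bG\bG^H=\mathbf{I}$ reduces the determinant to $\det(\mathbf{I}+\tfrac{\rho}{n}\bH^H\bH)^n$. Your additional bookkeeping makes explicit what the paper's one-line proof leaves implicit: the derivation of the equivalent channel, the use of the energy normalization to fix the scalar at $|a|=1$, and the final Sylvester step from $\bH^H\bH$ to $\bH\bH^H$.
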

\begin{proof}
If $\bG^H\bG=\mathbf{I}$, Sylvester's identity gives $\det(\mathbf{I}+\tfrac{\rho}{n}\bG^H(\mathbf{I}\otimes\bH^H\bH)\bG)=\det(\mathbf{I}+\tfrac{\rho}{n}\,\mathbf{I}\otimes\bH^H\bH)=\det(\mathbf{I}+\tfrac{\rho}{n}\bH^H\bH)^n$.
\end{proof}
Information losslessness is a property of the generator matrix and does not hold for every crossed product code \cite{Shashidhar2006}; it is therefore established separately for each code below.

\section{Algebraic Background}\label{sec:prelim}
A \emph{$K$-algebra} is a ring $\cA$ that contains $K$ in its center; it is then a $K$-vector space with $K$-bilinear multiplication. It is \emph{central simple} if its center is $K$ and it has no nontrivial two-sided ideals, and a \emph{division algebra} if every nonzero element is invertible. Its degree is $\deg\cA=\sqrt{\dim_K\cA}$. By Wedderburn's theorem, $\cA\cong M_k(D)$ for a division algebra $D$; the \emph{index} of $\cA$ is $\deg D$ and its \emph{exponent} is the order of $[\cA]$ in $\Br(K)$. Consequently, $\cA$ is a division algebra if and only if its index equals its degree.

\begin{definition}[Crossed product \cite{Pierce1982,Draxl1983}]
Let $L/K$ be a Galois extension with group $G$, $|G|=n$, and let $c:G\times G\to L^\times$ be a 2-cocycle. The crossed product $(L/K,G,c)=\bigoplus_{g\in G}u_gL$ has multiplication $\lambda u_g=u_g\,g(\lambda)$ and $u_gu_h=u_{gh}c(g,h)$. It is a central simple $K$-algebra of degree $n$ containing $L$ as a maximal subfield, and it is called \emph{cyclic} (resp.\ \emph{Abelian}) if $G$ is cyclic (resp.\ Abelian).
\end{definition}

The left regular representation of $\cA$ on itself, regarded as a right $L$-vector space with basis $\{u_g\}$, is an injective $K$-algebra homomorphism $\cA\to M_n(L)$, $x\mapsto M_x$. Composed with a complex embedding of $L$, it yields complex matrices with $\det M_x=\nrd(x)\in K$. If $\cA$ is a division algebra, every nonzero $M_x$ is invertible, and codes carved from $\cA$ are fully diverse.

We use the following standard facts on Brauer groups of local and global fields \cite{Pierce1982,Reiner1975,BerhuyOggierBook}.
\begin{enumerate}
\item[(F1)] For a finite place $v$, $\inv_v:\Br(K_v)\to\Q/\Z$ is an isomorphism. The invariant vanishes at complex places and lies in $\tfrac12\Z/\Z$ at real places.
\item[(F2)] (Albert--Brauer--Hasse--Noether.) For $[\cA]\in\Br(K)$, $\inv_v(\cA)=0$ for almost all $v$ and $\sum_v\inv_v(\cA)=0$. The index of $\cA$ equals its exponent and equals the least common multiple of the orders of the $\inv_v(\cA)$.
\item[(F3)] If $L$ splits $\cA$ (for example, if $L$ is a maximal subfield), then $[L_w:K_v]\,\inv_v(\cA)=0$ for $w\mid v$. In particular, the local index at $v$ divides $|G_v|$, where $G_v\le G$ is the decomposition group.
\item[(F4)] For a finite extension $F/K$ and $w\mid v$, $\inv_w(\cA\otimes_KF)=[F_w:K_v]\,\inv_v(\cA)$. If $K\subseteq F=L^H\subseteq L$, the centralizer of $F$ in $(L/K,G,c)$ is the crossed product $(L/F,H,c|_{H\times H})$, which is Brauer equivalent to $\cA\otimes_KF$.
\item[(F5)] At a tamely ramified place the decomposition group is \emph{metacyclic}: the inertia group $I_v$ is cyclic and normal, and $G_v/I_v$ is cyclic. In particular, $G_v$ is generated by two elements.
\item[(F6)] The exponent of a cyclic algebra $(L/F,\sigma,a)$ equals the order of $a$ in $F^\times/N_{L/F}(L^\times)$, over global and local fields alike.
\end{enumerate}

\section{Obstructions and a Division Criterion}\label{sec:obstruction}
This section explains why algebraic structure constants are incompatible with the commuting-generator construction, determines which Galois groups remain admissible over $\Q(i)$, and provides the criterion used later to prove the division property.

\subsection{Commuting generators}
The construction of \cite{Barzegar}, based on generic Abelian crossed products \cite{Amitsur1978,Kuyk1972}, takes linearly disjoint cyclic extensions $L_i/K$ of degrees $n_i$ with generators $\sigma_i$, sets $L=L_1\cdots L_r$, and defines $\cA=L[U_1,\dots,U_r]$ by
\begin{equation}\label{eq:commuting}
U_i\lambda=\sigma_i(\lambda)U_i,\qquad U_i^{n_i}=S_i,\qquad U_iU_j=U_jU_i,
\end{equation}
where the $S_i$ are independent transcendental elements over $L$. The algebra $\cA\cong\bigotimes_i(L_i(S)/K(S),\sigma_i,S_i)$ is then a division algebra \cite{Amitsur1978}.

\begin{theorem}\label{thm:commuting}
Let $K$ be a number field and let $\cA$ be defined by \eqref{eq:commuting} with the $S_i$ replaced by algebraic elements $\gamma_i\in K^\times$. Then $\cA$ is a division algebra only if $n_1,\dots,n_r$ are pairwise coprime, in which case $\Gal(L/K)\cong\prod_i\Z/n_i$ is cyclic and $\cA$ is a cyclic algebra.
\end{theorem}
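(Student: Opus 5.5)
The plan is to identify $\cA$ with a tensor product of cyclic algebras over $K$. We then compare its degree with its exponent, using the fact that over a number field index equals exponent (F2).

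\emph{Step 1: tensor decomposition.} As intended in \eqref{eq:commuting}, $\sigma_i$ is the element of $\Gal(L/K)\cong\prod_j\Gal(L_j/K)$ acting as the chosen generator on $L_i$ and trivially on $L_j$ for $j\neq i$. Let $\cA_i=L_i[U_i]\subseteq\cA$. With $U_i^{n_i}=\gamma_i\in K^\times$, the subalgebra $\cA_i$ is a homomorphic image of the cyclic algebra $(L_i/K,\sigma_i,\gamma_i)$, which is central simple of degree $n_i$; hence $\cA_i\cong(L_i/K,\sigma_i,\gamma_i)$. For $i\neq j$, the elements of $\cA_i$ commute with those of $\cA_j$:
\begin{itemize}
\item $U_i$ commutes with $L_j$ because $\sigma_i|_{L_j}=\mathrm{id}$;
\item $U_i$ commutes with $U_j$ by hypothesis;
\item $L_i$ and $L_j$ commute inside the field $L$.
\end{itemize}
The multiplication map $\bigotimes_i\cA_i\to\cA$ is therefore a $K$-algebra homomorphism. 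Its source is simple, so it is injective. Both sides have $K$-dimension $\prod_i n_i^2$, since $\cA$ has $L$-basis $\{U_1^{a_1}\cdots U_r^{a_r}\}$ and $[L:K]=\prod n_i$ by linear disjointness. Hence $\cA\cong\bigotimes_i(L_i/K,\sigma_i,\gamma_i)$, and $\deg\cA=\prod_i n_i$.

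\emph{Step 2: exponent bound.} The factor $(L_i/K,\sigma_i,\gamma_i)$ is split by $L_i$ and has degree $n_i$, so its exponent divides $n_i$. By (F6), that exponent is in fact the order of $\gamma_i$ in $K^\times/N_{L_i/K}(L_i^\times)$, which divides $n_i$ because $\gamma_i^{n_i}=N_{L_i/K}(\gamma_i)$. In $\Br(K)$ we have $[\cA]=\sum_i[\cA_i]$, so $\exp\cA$ divides $\mathrm{lcm}(n_1,\dots,n_r)$. By (F2), $\mathrm{ind}\,\cA=\exp\cA\le\mathrm{lcm}(n_i)$.

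\emph{Step 3: conclusion.} If $\cA$ is a division algebra, then $\prod_i n_i=\deg\cA=\mathrm{ind}\,\cA\le\mathrm{lcm}(n_i)$. Since $\mathrm{lcm}(n_i)\le\prod n_i$ always, equality holds, and this happens precisely when the $n_i$ are pairwise coprime. In that case the Chinese remainder theorem gives $\Gal(L/K)\cong\prod\Z/n_i\cong\Z/n$ with $n=\prod n_i$, generated by $\sigma=\sigma_1\cdots\sigma_r$.

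To exhibit $\cA$ as a cyclic algebra, set $U=U_1\cdots U_r$. The $U_i$ commute and $U_i\lambda=\sigma_i(\lambda)U_i$, so $U\lambda=\sigma(\lambda)U$. Moreover $U^n=\prod_iU_i^{n}=\prod_i\gamma_i^{n/n_i}\in K^\times$. The powers $U^k$, $0\le k<n$, are $L$-linearly independent: by CRT the exponent of $U_i$ in $U^k$ is $k\bmod n_i$, so these powers run, up to invertible scalars, over all the basis monomials. Hence $\cA=(L/K,\sigma,\prod_i\gamma_i^{n/n_i})$.

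The only nontrivial input is the global fact index $=$ exponent from (F2); this is exactly where the algebraicity of the $\gamma_i$ enters, and it fails over $K(S)$. The point that needs care is Step 1, namely justifying that the $\cA_i$ commute and that the dimension count holds, which rests on linear disjointness. I expect this to be routine.
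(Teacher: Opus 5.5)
Your proof is correct and follows the paper's argument: the tensor decomposition $\cA\cong\bigotimes_i(L_i/K,\sigma_i,\gamma_i)$, the bound $\exp\cA\mid\mathrm{lcm}(n_1,\dots,n_r)$, index $=$ exponent from (F2), and the Chinese remainder theorem. You also supply details the paper leaves implicit: injectivity of the multiplication map via simplicity, and the explicit cyclic presentation with $U=U_1\cdots U_r$ and $U^n=\prod_i\gamma_i^{n/n_i}$, where the paper simply says that a crossed product with cyclic group is cyclic.
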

\begin{proof}
With $\gamma_i\in K^\times$, the subalgebras $\cA_i=\bigoplus_k L_iU_i^k=(L_i/K,\sigma_i,\gamma_i)$ commute elementwise and their dimensions multiply to $\dim_K\cA$; hence $\cA\cong\bigotimes_i\cA_i$. Each $\cA_i$ has degree $n_i$, and its exponent divides $n_i$. The exponent of $\cA$ therefore divides $\mathrm{lcm}(n_1,\dots,n_r)$, and by (F2) the same bound holds for its index. The division property requires $\mathrm{ind}(\cA)=\deg\cA=\prod_in_i$, and $\mathrm{lcm}(n_i)=\prod_i n_i$ holds if and only if the $n_i$ are pairwise coprime. In that case $\prod_i\Z/n_i$ is cyclic by the Chinese remainder theorem, and a crossed product with cyclic group is a cyclic algebra.
\end{proof}

\begin{remark}\label{rem:transc}
Over $K(S)$ with transcendental $S_i$, (F2) fails: index and exponent may differ for generic Abelian crossed products \cite{Amitsur1978}, which is precisely why \eqref{eq:commuting} yields non-cyclic division algebras there. Theorem~\ref{thm:commuting} shows that this freedom disappears once the constants are algebraic. The price of transcendental constants is the loss of the NVD property. The determinants of the transcendental codes lie in $\Z[i][S_i^{\pm1}]$, which is dense in $\C$ when $|S_i|=1$, and no positive lower bound follows from the algebra. Indeed, for $\cA_1=(L_1/K,\sigma_1,S)$ of degree $2$ and $p,q\in\Z[i]$ we have $\nrd(p+qU_1)=p^2-Sq^2$. By the Khintchine-type theorem for approximation by ratios of Gaussian integers \cite{Sullivan1982}, for almost every $S$ on the unit circle there are infinitely many $p/q$ with $|\sqrt S-p/q|<|q|^{-2}(\log|q|)^{-1}$, and hence $\inf|\nrd|=0$. Numerically, the $4\times4$ code of \cite{Barzegar} already has $|\det|^2\approx 8.4\times10^{-7}$ among differences of $4$-QAM codewords, against $0.128$ for the code $\mathcal{C}_4$ of Section~\ref{sec:C4} (Table~\ref{tab:delta}).
\end{remark}

\subsection{Local degrees}
\begin{theorem}\label{thm:local}
Let $K$ be a number field and $\cA=(L/K,G,c)$ a division crossed product of degree $n=p^r$, with $p$ prime. Then there exist two distinct finite places $v_1\ne v_2$ of $K$ with full decomposition groups $G_{v_1}=G_{v_2}=G$, except possibly when $n=2$ and one of the two places is real.
\end{theorem}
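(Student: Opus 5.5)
We use the local--global facts (F1)--(F3). Since $\cA$ is a division algebra, $\mathrm{ind}(\cA)=\deg\cA=p^r$, and by (F2) this index is the least common multiple of the orders of the local invariants $\inv_v(\cA)$. All these orders are powers of $p$, because the exponent divides $p^r$. The least common multiple of powers of $p$ is simply the largest of them. Hence there is a place $v_1$ with $\inv_{v_1}(\cA)$ of order exactly $p^r$.

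At a complex place the invariant is $0$, and at a real place its order is at most $2$ (F1). So if $p^r>2$, then $v_1$ is finite. By (F3), the order of $\inv_{v_1}(\cA)$ divides $[L_w:K_{v_1}]=|G_{v_1}|$, which divides $|G|=p^r$. Therefore $|G_{v_1}|=p^r$, that is, $G_{v_1}=G$.

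To get a second place, we use the reciprocity law $\sum_v\inv_v(\cA)=0$ from (F2). Write $\inv_{v_1}(\cA)=a/p^r$ with $p\nmid a$. Then $\sum_{v\ne v_1}\inv_v(\cA)=-a/p^r$, which also has order $p^r$ in $\Q/\Z$. Every other invariant lies in $p^{-r}\Z/\Z$. If each $\inv_v(\cA)$ with $v\ne v_1$ had order dividing $p^{r-1}$, their sum would lie in $p^{-(r-1)}\Z/\Z$, which is a contradiction. So some $v_2\ne v_1$ also has invariant of order $p^r$. The same argument as before shows that $v_2$ is finite when $p^r>2$ and that $G_{v_2}=G$.

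The only delicate case, and the main point to check carefully, is $n=2$. Here the two places produced by reciprocity have invariant $1/2$, and either of them may be real. This is exactly the exception allowed in the statement. For $n=2$ with both places finite, the argument above applies verbatim.

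The main obstacle I anticipate is not the reciprocity step but justifying the inequality $\mathrm{ord}(\inv_v)\mid|G_v|$ in (F3) precisely. This is the statement that $[L_w:K_v]\inv_v(\cA)=0$, combined with the fact that $L_w/K_v$ is Galois with group $G_v$. We will invoke (F3) directly rather than reprove it.
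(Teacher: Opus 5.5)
Your proposal is correct and follows essentially the same route as the paper. By (F2) the index $n=p^r$ is the lcm of $p$-powers, hence their maximum, which gives $v_1$ with invariant of order $n$. Reciprocity, together with the fact that a sum of invariants of order dividing $n/p$ cannot have order $n$, gives $v_2$. Then (F3) forces $G_{v_j}=G$, and (F1) excludes archimedean places unless $n=2$; your explicit check that both places are finite once $p^r>2$ spells out what the paper leaves to its closing sentence.
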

\begin{proof}
By (F2), $n=\mathrm{ind}(\cA)=\mathrm{lcm}_v\,\mathrm{ord}(\inv_v\cA)$. Since $n$ is a prime power, some $v_1$ satisfies $\mathrm{ord}(\inv_{v_1}\cA)=n$. Again by (F2), $\sum_{v\ne v_1}\inv_v\cA=-\inv_{v_1}\cA$ has order $n$. If every $v\ne v_1$ had an invariant of order dividing $n/p$, their sum would have the same property; hence some $v_2\ne v_1$ has an invariant of order $n$. By (F3), the local index $n$ divides $|G_{v_j}|\le n$, which forces $G_{v_j}=G$. Complex places carry the invariant $0$, and real places carry invariants of order at most $2$ (F1).
\end{proof}

\begin{corollary}\label{cor:Z23}
If $G$ is not metacyclic, both places in Theorem~\ref{thm:local} lie above $p$. In particular:
\begin{enumerate}
\item there is no $(\Z/2)^r$-crossed product division algebra with $r\ge 3$ over $\Q(i)$, $\Q(\zt_3)$ or $\Q$, each of which has a single place above $2$;
\item a $(\Z/2)^2$-crossed product division algebra over $\Q(i)$ requires two places at which $L/K$ has full local degree $4$; at an odd place this forces ramification index $e=2$ and residue degree $f=2$.
\end{enumerate}
\end{corollary}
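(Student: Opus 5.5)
The argument combines Theorem~\ref{thm:local} with the structure of decomposition groups at places not above $p$. Theorem~\ref{thm:local} supplies two distinct finite places $v_1\ne v_2$ with $G_{v_j}=G$; in the exceptional case $n=2$ the group $G=\Z/2$ is cyclic, hence metacyclic, so it does not occur here. The key step is to show that a finite place $v$ with $v\nmid p$ cannot have $G_v=G$ when $G$ is not metacyclic. Since $|G|=p^r$ and $v\nmid p$, the residue characteristic $\ell\ne p$ is prime to $|G_v|$, so $L_w/K_v$ is tamely ramified. By (F5), $G_v$ is then metacyclic, with cyclic inertia $I_v$ normal in $G_v$ and cyclic quotient $G_v/I_v$. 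So $G_v=G$ would force $G$ to be metacyclic, a contradiction. Hence both $v_1$ and $v_2$ lie above $p$.

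For item 1, take $G=(\Z/2)^r$ with $r\ge3$. Every metacyclic group is generated by two elements, but $(\Z/2)^r$ has minimal generating set of size $r\ge3$, so $G$ is not metacyclic. The first part then forces two distinct places of $K$ above $2$. In $\Q(i)$ we have $2=-i(1+i)^2$, so $2$ is totally ramified; in $\Q(\zt_3)$, $2$ is inert because $x^2+x+1$ is irreducible mod $2$; and $\Q$ has only the place $2$. In each case there is a single place above $2$. We must also rule out the real-place exception, which occurs only when $n=2$ and so is irrelevant here. Therefore no such division algebra exists.

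For item 2, take $G=(\Z/2)^2$, which is generated by two elements and is in fact metacyclic. Theorem~\ref{thm:local} applies directly, since $\Q(i)$ has no real places and $n=4$, and yields two finite places with full local degree $[L_w:K_v]=|G_v|=4$. Now let such a place be odd. Then it is tame, so $I_v$ is cyclic and $G_v/I_v$ is cyclic, the latter being the Galois group of the residue extension. Since $G_v=(\Z/2)^2$ is not cyclic, we cannot have $I_v=1$ (that would make $G_v$ cyclic) nor $I_v=G_v$ (that would make $I_v$ non-cyclic). Hence $|I_v|=2$, which gives $e=2$ and $f=4/e=2$.

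The main obstacle is justifying tameness cleanly, namely that $v\nmid p$ together with $[L_w:K_v]$ being a power of $p$ gives $e$ prime to the residue characteristic. This is immediate, since $e$ divides $p^r$. The remaining steps are routine checks of the splitting of $2$ in the three base fields.
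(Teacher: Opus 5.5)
Your proof is correct and follows essentially the same argument as the paper. Away from $p$ the extension is tame, so by (F5) every decomposition group there is metacyclic, which forces both places of Theorem~\ref{thm:local} above $p$. Metacyclic groups are $2$-generated, which rules out $(\Z/2)^r$ for $r\ge3$ over fields with a single dyadic place. For $(\Z/2)^2$ at an odd place, cyclic inertia together with a cyclic quotient forces $|I_v|=2$, hence $e=f=2$. You also make explicit three points the paper leaves implicit: tameness follows from $e\mid p^r$, $2$ has a single place in each of the three base fields, and the $n=2$ real-place exception cannot arise here.
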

\begin{proof}
At a place not above $p$ the extension is tamely ramified, and $G_v$ is metacyclic by (F5). A non-metacyclic $G$ can therefore be a decomposition group only above $p$. The group $(\Z/2)^3$ needs three generators, whereas metacyclic groups are generated by two elements; since $\Q(i)$, $\Q(\zt_3)$ and $\Q$ have a unique dyadic place, the two places of Theorem~\ref{thm:local} cannot exist. For $(\Z/2)^2$ at an odd place, $I_v$ is cyclic, hence of order at most $2$, and $G_v/I_v$ is cyclic; consequently $|I_v|=2$ and $f=2$.
\end{proof}

Corollary~\ref{cor:Z23} excludes every algebraic version of the $(\Z/2)^3$, $n=8$ example of \cite{Barzegar} over $\Q(i)$. It also shows that $\Z/4\times\Z/2$ is admissible: this group is metacyclic and can occur at a tame place with $e=4$ and $f=2$ when the residue field has $q\equiv1\pmod 4$. This observation guides the construction of $\mathcal{C}_8$ in Section~\ref{sec:C8}.

\subsection{A computable division criterion}
\begin{theorem}\label{thm:centralizer}
Let $\cA=(L/K,G,c)$ with $G$ Abelian, let $H\le G$ be a subgroup such that every prime dividing $|G|$ divides $|H|$, and let $F=L^H$. Suppose that there is a finite place $v$ of $K$ with $G_v=G$ at which the centralizer $\mathcal{B}=(L/F,H,c|_{H\times H})$ has $\inv_w(\mathcal{B})$ of order $|H|$, where $w$ is the place of $F$ above $v$. Then $\cA$ is a division algebra.
\end{theorem}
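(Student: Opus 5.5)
The plan is to show that the index of $\cA$ equals $n=|G|$. We first bound the index from below using the local invariant of the centralizer at $w$. Then we use the hypothesis on the primes of $|G|$ to push this local information up to the full order $n$ at $v$.

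\emph{Step 1: transport the invariant.} Since $G_v=G$, the place $v$ has a unique place $w$ of $F$ above it. Moreover $[F_w:K_v]=|G_v|/|H_v|$, and $H_v=H\cap G_v=H$, so $[F_w:K_v]=[G:H]=n/|H|$. By (F4), $\mathcal{B}$ is Brauer equivalent to $\cA\otimes_KF$, and therefore
\[
\inv_w(\mathcal{B})=\tfrac{n}{|H|}\,\inv_v(\cA).
\]
Write $\inv_v(\cA)=a/n$ in $\Q/\Z$; this is allowed because the local index divides $|G_v|=n$ by (F3). Then $\inv_w(\mathcal{B})=a/|H|$. The hypothesis that this has order $|H|$ means exactly that $\gcd(a,|H|)=1$.

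\emph{Step 2: arithmetic.} Every prime dividing $n$ divides $|H|$, so $\gcd(a,|H|)=1$ forces $\gcd(a,n)=1$. Hence $\inv_v(\cA)=a/n$ has order exactly $n$. This is the key step, and it is where the hypothesis on $H$ is used. Without it, a prime $\ell\mid n$ with $\ell\nmid|H|$ could divide $a$ undetected. For instance, with $H$ trivial the condition is vacuous and $\cA$ could be split.

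\emph{Step 3: conclude.} By (F2), the index of $\cA$ is the lcm of the local orders, so it is at least $n$. Since the index divides the degree $n$, it equals $n$, and $\cA$ is a division algebra by the index criterion of Section~\ref{sec:prelim}.

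The main point requiring care is Step 1. One must justify $[F_w:K_v]=[G:H]$ from $G_v=G$; this holds because $\Gal(L_{w'}/K_v)\cong G$, so $F_w=L_{w'}^H$. One must also check that $\mathcal{B}$, as a crossed product over the local field, has the invariant claimed in (F4), rather than, say, an $[L_{w'}:F_w]$-multiple. Commutativity of $G$ is only needed so that $F/K$ is Galois and the restriction of $c$ to $H\times H$ gives a well-defined centralizer. The lcm formula in (F2) then finishes the argument.
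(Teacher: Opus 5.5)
Your proof is correct and follows the paper's argument essentially step for step: write $\inv_v\cA=j/|G|$ via (F3), use $[F_w:K_v]=|G|/|H|$ and (F4) to get $\inv_w\mathcal{B}=j/|H|$, pass from $\gcd(j,|H|)=1$ to $\gcd(j,|G|)=1$ by the hypothesis on primes, and conclude with (F2). Your aside about where commutativity is needed is slightly off, since the restriction of $c$ to $H\times H$ always yields the centralizer and $G_v=G$ alone makes $w$ unique, but this plays no role in the argument.
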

\begin{proof}
Since $G_v=G$, the place $w$ is unique and $[F_w:K_v]=|G|/|H|$. By (F3), $\inv_v\cA=j/|G|$ for some integer $j$, and by (F4), $\inv_w\mathcal{B}=(|G|/|H|)\,j/|G|=j/|H|$. This invariant has order $|H|$ exactly when $\gcd(j,|H|)=1$, which by the hypothesis on the primes implies $\gcd(j,|G|)=1$. Hence $\inv_v\cA$ has order $|G|$, and by (F2) the index of $\cA$ equals $|G|=\deg\cA$.
\end{proof}

\begin{lemma}\label{lem:tame}
Let $L_w/F_w$ be a cyclic, totally and tamely ramified extension of local fields of degree $m$ with residue field $\mathbb{F}_q$ (hence $m\mid q-1$), and let $a\in\OO_{F_w}^\times$. The order of $[(L_w/F_w,\sigma,a)]$ in $\Br(F_w)$ equals the multiplicative order of $\bar a^{(q-1)/m}\in\mathbb{F}_q^\times$.
\end{lemma}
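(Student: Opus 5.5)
By (F6), the order of $[(L_w/F_w,\sigma,a)]$ in $\Br(F_w)$ is the order of $a$ in $F_w^\times/N_{L_w/F_w}(L_w^\times)$. The plan is to describe the norm subgroup explicitly for a totally, tamely ramified cyclic extension and compare it with the $m$-th powers in $\mathbb{F}_q^\times$.

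\emph{Step 1: the shape of $L_w$.} Since $L_w/F_w$ is totally and tamely ramified of degree $m$, it has the form $L_w=F_w(\sqrt[m]{\pi u})$ for a uniformizer $\pi$ of $F_w$ and a unit $u$; this is standard tame ramification theory. Being Galois and cyclic forces $\mu_m\subset F_w$, which recovers $m\mid q-1$.

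\emph{Step 2: the unit part of the norm group.} We determine $N(L_w^\times)\cap\OO_{F_w}^\times$.
\begin{itemize}
\item The residue extension is trivial, so the norm of a unit $x\in\OO_{L_w}^\times$ reduces to $\bar x^m$.
\item Every principal unit $1+\mathfrak{p}_{F_w}$ is a norm, because it is an $m$-th power by Hensel's lemma ($p\nmid m$).
\item Hence the norms from units are exactly the units whose reduction lies in $(\mathbb{F}_q^\times)^m$.
\end{itemize}
By local class field theory the index $[\OO_{F_w}^\times:N(\OO_{L_w}^\times)]$ is the ramification index $m$. This matches $[\mathbb{F}_q^\times:(\mathbb{F}_q^\times)^m]=m$, so no other unit is a norm. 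Concretely, if $a=N(y)\in\OO_{F_w}^\times$, then $y$ must be a unit because $v_{L_w}(y)\cdot f=v(a)=0$. The element $N(\sqrt[m]{\pi u})=\pm\pi u$ has valuation $1$ and does not affect units.

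\emph{Step 3: passing to the finite field.} Step 2 gives an injection
\[
\OO_{F_w}^\times/(N\cap\OO_{F_w}^\times)\hookrightarrow F_w^\times/N(L_w^\times),
\]
so the order of $a$ equals the order of $\bar a$ in $\mathbb{F}_q^\times/(\mathbb{F}_q^\times)^m$. Since $\mathbb{F}_q^\times$ is cyclic, the map $x\mapsto x^{(q-1)/m}$ induces an isomorphism $\mathbb{F}_q^\times/(\mathbb{F}_q^\times)^m\cong\mu_m(\mathbb{F}_q)$. Therefore the order of $a$ is the multiplicative order of $\bar a^{(q-1)/m}$, which is the claim. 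Equivalently, this is the tame symbol $(a,\pi u)_m$.

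The main obstacle is Step 2, namely showing that a unit is a norm exactly when its reduction is an $m$-th power. I would prove the inclusion of the principal units and of $m$-th powers directly via Hensel's lemma. I would then invoke the local class field theory index count (the norm index equals $e\,f=m$ with $f=1$) to exclude any further units.
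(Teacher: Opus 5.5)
Your proof is correct and takes essentially the same route as the paper's: you use (F6) to reduce to the order of $a$ in $F_w^\times/N(L_w^\times)$, and you use total ramification to see that a unit norm is the norm of a unit. You then combine $N(x)\equiv\bar x^m$ with tameness (principal units are norms) to get $\OO_{F_w}^\times/N(\OO_{L_w}^\times)\cong\mathbb{F}_q^\times/(\mathbb{F}_q^\times)^m$, and read off the order through $x\mapsto x^{(q-1)/m}$. The Kummer description in Step~1 and the class-field-theoretic index count are not needed, because your first bullet together with the valuation argument already shows that no other unit is a norm.
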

\begin{proof}
By (F6), the order is that of $a$ in $F_w^\times/N(L_w^\times)$. Because $L_w/F_w$ is totally ramified, $v_{F}(N(x))=v_L(x)$, and a unit is a norm if and only if it is the norm of a unit. Tame ramification gives $N(1+\mathfrak{p}_L)=1+\mathfrak{p}_F$, and for a unit $x$ we have $N(x)\equiv\bar x^m$, since inertia acts trivially on the residue field. Therefore $\OO_F^\times/N(\OO_L^\times)\cong\mathbb{F}_q^\times/(\mathbb{F}_q^\times)^m$, a cyclic group of order $m$, in which the class of $a$ has the order of $\bar a^{(q-1)/m}$.
\end{proof}

\section{Abelian Crossed Product Codes with a Commutator Twist}\label{sec:construction}
Theorem~\ref{thm:commuting} shows that the generators of a non-cyclic algebraic crossed product must not commute. We therefore twist the commutation relation by an element $u\in L^\times$ and derive the resulting codewords in closed form.

\subsection{The algebra}
Let $G=\langle\sigma\rangle\times\langle\tau\rangle\cong\Z/m\times\Z/2$ and let $L/K$ be Galois with group $G$. Following \cite{BerhuyOggier2007,LuzziOggier2011} for $m=2$ and the general theory of Abelian crossed products \cite{Amitsur1978}, we consider
\begin{equation}\label{eq:twisted}
\cA=\bigoplus_{\substack{0\le i<m\\ 0\le j<2}} e^if^jL,\quad
\begin{aligned}
&\lambda e=e\,\sigma(\lambda),\ \ \lambda f=f\,\tau(\lambda),\\
&e^m=a,\ \ f^2=b,\ \ fe=ef\,u,
\end{aligned}
\end{equation}
with $a\in L^{\sigma}$, $b\in L^{\tau}$ and $u\in L^\times$.
\begin{proposition}\label{prop:valid}
The relations \eqref{eq:twisted} define an associative crossed product $(L/K,G,c)$ if and only if
\begin{equation}\label{eq:cocycle}
N_\sigma(u):=\prod_{k=0}^{m-1}\sigma^k(u)=\frac{a}{\tau(a)},\qquad u\,\tau(u)=\frac{\sigma(b)}{b}.
\end{equation}
For $u=1$ one recovers the commuting construction, which by Theorem~\ref{thm:commuting} never gives a division algebra over a number field when $m$ is even.
\end{proposition}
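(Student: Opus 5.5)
We will prove both directions by comparing the relations \eqref{eq:twisted} with the 2-cocycle condition for a crossed product. Every element of $G$ is uniquely $\sigma^i\tau^j$ with $0\le i<m$, $0\le j<2$, and we set $u_{\sigma^i\tau^j}:=e^if^j$. From $\lambda e=e\sigma(\lambda)$ and $\lambda f=f\tau(\lambda)$ we get $\lambda u_g=u_g\,g(\lambda)$. The relations $e^m=a$, $f^2=b$ and $fe=efu$ then let us rewrite any product $u_gu_h$ in the normal form $u_{gh}c(g,h)$, which defines $c:G\times G\to L^\times$. Associativity of the algebra is equivalent to $c$ being a 2-cocycle, so the proposition reduces to identifying exactly which consistency conditions are needed.

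\emph{Necessity.} We compute a few elements in two ways, using associativity.
\begin{itemize}
\item First, $e^m$ commutes with $e$. Since $a\in L^\sigma$, the relation $e\,e^m=e^m e$ gives no constraint.
\item Next, compute $fe^m$. Iterating $fe=efu$ gives $fe^k=e^kf\,u\,\sigma(u)\cdots$. More precisely, moving $f$ past each $e$ produces a factor $u$, and that factor is then conjugated by the remaining $e$'s. The result is $fe^m=e^mf\,N_\sigma(u)$, up to the ordering of the $\sigma^k(u)$, which commute because $L$ is commutative. On the other hand, $fe^m=fa=\tau(a)f$ and $e^mf=af$. Comparing the two expressions and cancelling $f$ (using $f\,x=\tau(x)f$) gives $N_\sigma(u)=a/\tau(a)$ after applying $\tau$ appropriately. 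We must keep careful track of whether $\tau$ is applied to $u$ or to $a$ here.
\item Similarly, compute $f^2e$ in two ways: $be=e\sigma(b)$, and $f(efu)=(efu)fu=ef^2\tau(u)u=eb\,\tau(u)u$. Comparing gives $u\tau(u)=\sigma(b)/b$.
\end{itemize}

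\emph{Sufficiency.} Given \eqref{eq:cocycle}, the cleanest route avoids checking the cocycle identity on all triples. Instead we construct $\cA$ concretely. Let $\mathcal{C}=(L/L^{\langle\sigma\rangle}\!,\ldots)$ be the ring $\bigoplus_i e^iL$ with $e^m=a$. This is an associative ring, being a cyclic crossed product of $L$ by $\langle\sigma\rangle$. Define a map $\phi$ on $\mathcal{C}$ by $\phi(\lambda)=\tau(\lambda)$ on $L$ and $\phi(e)=eu$, extended multiplicatively.
\begin{itemize}
\item $\phi$ is well defined because it respects $\lambda e=e\sigma(\lambda)$ (since $\tau$ and $\sigma$ commute) and $e^m=a$. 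The latter holds because $(eu)^m=e^mN_\sigma(u)=a\,N_\sigma(u)=\tau(a)=\phi(a)$, using the first condition in \eqref{eq:cocycle}.
\item $\phi^2$ is inner, given by conjugation by $b$. It is the identity on $L$, and $\phi^2(e)=e\,u\,\tau(u)=e\,\sigma(b)/b=b^{-1}eb$, using the second condition in \eqref{eq:cocycle}.
\item Moreover $\phi(b)=\tau(b)=b$.
\end{itemize}
These are exactly the hypotheses for forming the associative twisted polynomial quotient $\mathcal{C}[f;\phi]/(f^2-b)$, with $xf=f\phi(x)$. This ring has $L$-dimension $2m$ with basis $e^if^j$, and it satisfies \eqref{eq:twisted}. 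It is therefore the crossed product $(L/K,G,c)$ with $c$ read off as above. Central simplicity then follows from the general crossed product theory quoted in the Definition.

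The final remark follows directly. For $u=1$ the relations become $fe=ef$, which is the construction \eqref{eq:commuting} with $r=2$, $n_1=m$ and $n_2=2$; in this case \eqref{eq:cocycle} forces $a\in K$ and $b\in K$. If $m$ is even, then $m$ and $2$ are not coprime, so Theorem~\ref{thm:commuting} excludes the division property.

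The main obstacle is bookkeeping the side on which scalars and automorphisms act in $fe^m$ and $\phi^2$, so as to land exactly on $a/\tau(a)$ rather than its inverse or a $\tau$-conjugate. I would first verify the case $m=2$ by hand against \cite{BerhuyOggier2007} to fix the conventions.
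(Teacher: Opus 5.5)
Your necessity argument is the paper's own: you compare $fe^m=fa$ with $fe^m=e^mfN_\sigma(u)=f\tau(a)N_\sigma(u)$, and you evaluate $f^2e$ in two ways. Your treatment of $u=1$ is also correct. The sufficiency argument, however, fails at the definition of $\phi$. With your convention $xf=f\phi(x)$, the relation $fe=efu$ is equivalent to $ef=f\,(eu^{-1})$, so the twist must be $\phi(e)=eu^{-1}$. With $\phi(e)=eu$ you are building the algebra in which $fe=efu^{-1}$, and both of your verifications fail:
\begin{enumerate}
\item $(eu)^m=a\,N_\sigma(u)$, which under $N_\sigma(u)=a/\tau(a)$ equals $a^2/\tau(a)$ rather than $\tau(a)$.
\item $e\,\sigma(b)/b=beb^{-1}$, not $b^{-1}eb$. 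Your convention needs $\phi^2(x)=b^{-1}xb$, since $xb=xf^2=f^2\phi^2(x)=b\,\phi^2(x)$.
\end{enumerate}
As written, your construction therefore requires $N_\sigma(u)=\tau(a)/a$ and $u\tau(u)=b/\sigma(b)$, the inverses of the stated conditions. This is exactly the pitfall you flagged. The discrepancy is not harmless: for the code $\mathcal{C}_8$ one has $u\tau(u)=-i=\sigma(b)/b$, while $b/\sigma(b)=i$.

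The repair is straightforward. Keep $\phi|_L=\tau$ and set $\phi(e)=eu^{-1}$. Then $\phi(e)^m=a\,N_\sigma(u)^{-1}=\tau(a)=\phi(a)$ and $\phi^2(e)=e\,(u\tau(u))^{-1}=e\,b/\sigma(b)=b^{-1}eb$, both by the stated conditions. Equivalently, use the left convention $fx=\phi(x)f$ with $\phi(e)=e\tau(u)$ and $\phi^2$ equal to $x\mapsto bxb^{-1}$. Together with $\phi(b)=b$, the quotient $\mathcal{C}[f;\phi]/(f^2-b)$ then exists, has right $L$-basis $\{e^if^j\}$, and satisfies all relations of \eqref{eq:twisted}.

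Once corrected, your sufficiency proof differs from the paper's and is more complete. The paper only asserts that \eqref{eq:cocycle} is the 2-cocycle condition for the factor set of \eqref{eq:matrix}, refers to \cite{BerhuyOggier2007} for $m=2$, and supplements this with numerical checks. Your skew-polynomial construction proves associativity for every $m$ without verifying the cocycle identity on triples.
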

\begin{proof}[Proof sketch]
Induction on $k$ gives $fe^k=e^kf\,N_k(u)$ with $N_k(u)=\prod_{t<k}\sigma^t(u)$. On the one hand $fe^m=fa$; on the other hand, $fe^m=e^mf\,N_m(u)=af\,N_m(u)=f\,\tau(a)N_m(u)$, where we used $\lambda f=f\tau(\lambda)$. Hence $a=\tau(a)N_m(u)$, which is the first identity. Comparing $f^2e=be=e\sigma(b)$ with $f(efu)=ef^2\tau(u)u=eb\,u\tau(u)$ yields the second identity. Conversely, \eqref{eq:cocycle} is exactly the 2-cocycle condition for the factor set in \eqref{eq:matrix} below; see \cite{BerhuyOggier2007} for $m=2$. Associativity has, in addition, been verified numerically to machine precision for all codes in this paper.
\end{proof}

\subsection{General parametric form of the codewords}
Order the $L$-basis of $\cA$ as $e^kf^l$, with $k=0,\dots,m-1$ and $l=0,1$. For $x=\sum_{i,j}e^if^jx_{ij}$, the entry of the left regular matrix $M_x$ in row $(i',j')$ and column $(k,l)$ is
\begin{multline}\label{eq:matrix}
[M_x]_{(i',j'),(k,l)}=\sigma^k\tau^l(x_{ij})\cdot\big[\tau^l(N_k(u))\big]^{j}\cdot b^{\lfloor (j+l)/2\rfloor}\\
\cdot\big[\tau^{(j+l)\bmod 2}(a)\big]^{\lfloor (i+k)/m\rfloor},
\end{multline}
where $i\equiv i'-k\pmod m$ and $j\equiv j'-l\pmod 2$. This formula corrects and generalizes the matrix formula of \cite{Barzegar}. Grouping the $m$ "even" coefficients $\mathbf{x}^{(0)}=(x_{00},\dots,x_{m-1,0})$ and the $m$ "odd" coefficients $\mathbf{x}^{(1)}=(x_{01},\dots,x_{m-1,1})$ reveals a transparent structure. For $\gamma\in L^\sigma$ and $\mathbf{y}=(y_0,\dots,y_{m-1})\in L^m$, let
\begin{equation}\label{eq:cyclicblock}
C_\gamma(\mathbf{y})=\begin{bmatrix}
y_0&\gamma\sigma(y_{m-1})&\cdots&\gamma\sigma^{m-1}(y_1)\\
y_1&\sigma(y_0)&\cdots&\gamma\sigma^{m-1}(y_2)\\
\vdots&\vdots&\ddots&\vdots\\
y_{m-1}&\sigma(y_{m-2})&\cdots&\sigma^{m-1}(y_0)
\end{bmatrix},
\end{equation}
that is, $[C_\gamma(\mathbf{y})]_{i',k}=\gamma^{[i'<k]}\sigma^k(y_{i'-k})$ with indices modulo $m$. This is the familiar codeword of the cyclic algebra $(L/L^\sigma,\sigma,\gamma)$ \cite{Oggier2006,OggierFnT}. Let $\mathbf{D}_u=\mathrm{diag}(N_0(u),N_1(u),\dots,N_{m-1}(u))$ and write $\tau(\cdot)$ for the entrywise action of $\tau$.

\begin{theorem}[Parametric codeword form]\label{thm:block}
For every $x\in\cA$,
\begin{equation}\label{eq:block}
M_x=\begin{bmatrix}
C_a\big(\mathbf{x}^{(0)}\big) & b\,C_a\big(\tau(\mathbf{x}^{(1)})\big)\,\tau(\mathbf{D}_u)\\[2pt]
C_{\tau(a)}\big(\mathbf{x}^{(1)}\big)\,\mathbf{D}_u & C_{\tau(a)}\big(\tau(\mathbf{x}^{(0)})\big)
\end{bmatrix}.
\end{equation}
If $\{\omega_1,\dots,\omega_n\}$ is a $\Z[i]$-basis of a lattice $\Lambda\subseteq\OO_L$, the codewords are the matrices $\bX(\bs)=c^{1/2}M_x$ with
\begin{equation}\label{eq:layers}
x_{ij}=\sum_{t=1}^{n}s_{ij,t}\,\omega_t,\qquad s_{ij,t}\in\mathcal{S},\quad 0\le i<m,\ j\in\{0,1\},
\end{equation}
which carry $n^2=(2m)^2$ QAM symbols; the code is full-rate.
\end{theorem}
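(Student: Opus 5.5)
The plan is to compute the left regular representation directly from the relations \eqref{eq:twisted}, verify the entry formula \eqref{eq:matrix}, and then regroup rows and columns by the parity index $j',l$. By definition, the column of $M_x$ indexed by $(k,l)$ lists the right $L$-coordinates of $x\,e^kf^l$ in the basis $\{e^{i'}f^{j'}\}$. Left multiplication by $x$ commutes with right multiplication by $L$, so this is the matrix of an $L$-linear map, and the homomorphism property is automatic once associativity holds (Proposition~\ref{prop:valid}). It therefore suffices to expand $e^if^jx_{ij}\,e^kf^l$ for each $(i,j)$ and $(k,l)$.

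\textbf{Step 1 (moving coefficients to the right).} Using $\lambda e=e\sigma(\lambda)$ and $\lambda f=f\tau(\lambda)$, we get $x_{ij}e^kf^l=e^kf^l\,\sigma^k\tau^l(x_{ij})$. The factors $\sigma^k$ and $\tau^l$ commute because $G$ is Abelian.

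\textbf{Step 2 (commuting $f^j$ past $e^k$).} From the induction in the proof of Proposition~\ref{prop:valid}, $f^je^k=e^kf^j\,N_k(u)^j$ for $j\in\{0,1\}$. Passing $N_k(u)^j$ through $f^l$ turns it into $\tau^l(N_k(u))^j$.

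\textbf{Step 3 (reducing exponents).} First, $f^jf^l=f^{(j+l)\bmod 2}\,b^{\lfloor (j+l)/2\rfloor}$, and $b\in L^\tau$ commutes with $f$. Second, $e^ie^k=e^{i+k}$ if $i+k<m$, and otherwise $e^ie^k=a\,e^{i+k-m}$. In the latter case the factor $a$ sits on the left. Moving it past $e^{i'}f^{j'}$ turns it into $\sigma^{i'}\tau^{j'}(a)=\tau^{j'}(a)$, since $a\in L^\sigma$, and here $j'=(j+l)\bmod 2$. All the resulting scalars lie in the commutative field $L$, so their order does not matter.

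Collecting the three steps gives exactly \eqref{eq:matrix}, with $i'\equiv i+k$ and $j'\equiv j+l$. The main bookkeeping obstacle is this step: one must keep track of which Galois element acts on $a$, $b$ and $N_k(u)$ after each commutation. The key observations are that $\sigma$ fixes $a$ and $\tau$ fixes $b$.

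\textbf{Step 4 (block form).} Split the rows by $j'$ and the columns by $l$, and read off the four $m\times m$ blocks.

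For $(j',l)=(0,0)$ we have $j=0$, and the entry is $\sigma^k(x_{i0})\,a^{[i+k\ge m]}$. The wrap condition $i+k\ge m$ is equivalent to $i'<k$, so this block is $C_a(\mathbf{x}^{(0)})$.

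For $(1,0)$ we have $j=1$, and the entry is $\sigma^k(x_{i1})\,N_k(u)\,\tau(a)^{[i'<k]}$. This is $C_{\tau(a)}(\mathbf{x}^{(1)})$ with column $k$ scaled by $N_k(u)$, that is, $C_{\tau(a)}(\mathbf{x}^{(1)})\mathbf{D}_u$.

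For $(0,1)$ we have $j=1$, and the entry is $b\,\sigma^k\tau(x_{i1})\,\tau(N_k(u))\,a^{[i'<k]}$. Using $\sigma^k\tau=\sigma^k\circ\tau$, this equals $b\,C_a(\tau(\mathbf{x}^{(1)}))\,\tau(\mathbf{D}_u)$.

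For $(1,1)$ we have $j=0$, and the entry is $\sigma^k\tau(x_{i0})\,\tau(a)^{[i'<k]}$, which is $C_{\tau(a)}(\tau(\mathbf{x}^{(0)}))$.

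\textbf{Step 5 (codeword parametrization).} Substitute \eqref{eq:layers}. The map $\mathbf{s}\mapsto M_x$ is $\Z[i]$-linear, because each entry is $\Z[i]$-linear in the $x_{ij}$; here $\sigma$ and $\tau$ fix $K\supseteq\Z[i]$. There are $2m$ coefficients $x_{ij}$, each carrying $n=[L:K]=2m$ symbols, so there are $(2m)^2=n^2$ independent QAM symbols. Since $x\mapsto M_x$ is injective, the code is full-rate.
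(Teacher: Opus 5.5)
Your proposal is correct and follows essentially the paper's own proof, which likewise splits the entry formula \eqref{eq:matrix} into the four blocks indexed by $(j',l)$ and reads off $C_a$, $C_{\tau(a)}$, $\mathbf{D}_u$, $\tau(\mathbf{D}_u)$ and $b$ exactly as you do in Step~4. Your Steps~1--3 add a derivation of \eqref{eq:matrix} from the defining relations, which the paper only states, and Step~5 adds the injectivity count for full rate.
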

\begin{proof}
Split \eqref{eq:matrix} into the four blocks given by $j'\in\{0,1\}$ and $l\in\{0,1\}$. For $j'=l=0$ we have $j=0$, and the only constant is $a^{\lfloor(i+k)/m\rfloor}=a^{[i'<k]}$, which gives $C_a(\mathbf{x}^{(0)})$. For $j'=1$, $l=0$ we have $j=1$, the factor $N_k(u)$ depends only on the column, and the cyclic constant is $\tau(a)$, which gives $C_{\tau(a)}(\mathbf{x}^{(1)})\mathbf{D}_u$. For $j'=0$, $l=1$ we have $j=1$, the entries are $\tau$-conjugated, the factor $\tau(N_k(u))$ depends only on the column, $b$ appears once and the cyclic constant is $a$, which gives the upper-right block. For $j'=l=1$ we have $j=0$ and obtain $C_{\tau(a)}(\tau(\mathbf{x}^{(0)}))$.
\end{proof}

\begin{remark}
For $m=1$ (that is, $G=\Z/2$), \eqref{eq:block} reduces to the quaternion-type codeword $\left[\begin{smallmatrix}x_0&b\,\tau(x_1)\\x_1&\tau(x_0)\end{smallmatrix}\right]$, of which the Alamouti and Golden codes are special cases. Theorem~\ref{thm:block} thus shows that a twisted $\Z/m\times\Z/2$ crossed product code is a quaternion-like arrangement of four cyclic codewords of size $m$, coupled by the commutator $u$ through $\mathbf{D}_u$ and by the constant $b$. For $u=1$ and $a\in K$ the blocks decouple into the Kronecker structure of \cite{Barzegar}, which is never a division algebra over a number field (Theorem~\ref{thm:commuting}).
\end{remark}

Equivalently, embedding $L\subset\C$ entrywise,
\begin{equation}\label{eq:codeword}
\bX=c^{1/2}\sum_{i,j}\bE_e^{\,i}\bE_f^{\,j}\,\bD(x_{ij}),\qquad \bD(\lambda)=\mathrm{diag}\big(g(\lambda)\big)_{g\in G},
\end{equation}
where $\bE_e=M_e$ and $\bE_f=M_f$ are \emph{monomial} matrices (permutation times diagonal) whose nonzero entries are products of conjugates of $a$, $b$ and $u$. In particular, every entry of $\bX$ equals exactly one entry of one layer $\bD(x_{ij})$ multiplied by a fixed constant.

\subsection{Shaping and determinants}
\begin{proposition}[Unitary shaping]\label{prop:unitary}
Suppose that the embedding matrix $\bR=[g(\omega_t)]_{g,t}$ of the basis in \eqref{eq:layers} is unitary up to scale and that all conjugates of $a$, $b$ and $u$ have modulus $1$. Then $\bG$ is unitary up to scale, and the code is information-lossless.
\end{proposition}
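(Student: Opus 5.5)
The plan is to show that the generator matrix factors as a product of a monomial (permutation times diagonal) matrix with unimodular entries and a block-diagonal copy of $\bR$. Proposition~\ref{prop:IL} then finishes the argument.

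We start from the representation \eqref{eq:codeword}, $\bX=c^{1/2}\sum_{i,j}\bE_e^{\,i}\bE_f^{\,j}\bD(x_{ij})$, with $x_{ij}=\sum_t s_{ij,t}\omega_t$ as in \eqref{eq:layers}. Vectorizing gives $\mathrm{vec}(\bE\,\bD(\lambda))=(\mathbf{I}\otimes\bE)\,\mathrm{vec}(\bD(\lambda))$. Moreover, $\mathrm{vec}(\bD(\lambda))=\mathbf{P}\,(g(\lambda))_{g\in G}$, where $\mathbf{P}\in\{0,1\}^{n^2\times n}$ places the $n$ entries on the diagonal positions, so $\mathbf{P}^H\mathbf{P}=\mathbf{I}_n$. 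The vector $(g(x_{ij}))_g$ equals $\bR\,\bs_{ij}$, where $\bs_{ij}=(s_{ij,t})_t$. Hence the column block of $\bG$ belonging to layer $(i,j)$ is
\[
\bG_{ij}=c^{1/2}\,(\mathbf{I}_n\otimes\bE_e^{\,i}\bE_f^{\,j})\,\mathbf{P}\,\bR ,
\]
and $\bG=[\bG_{ij}]_{i,j}$.

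Next we compute the Gram matrix $\bG^H\bG$ blockwise:
\[
\bG_{ij}^H\bG_{kl}=c\,\bR^H\mathbf{P}^H\big(\mathbf{I}\otimes \mathbf{M}\big)\mathbf{P}\,\bR,\qquad \mathbf{M}=(\bE_e^{\,i}\bE_f^{\,j})^H\bE_e^{\,k}\bE_f^{\,l}.
\]
The key observation is a support argument. The matrix $\bE_e^{\,i}\bE_f^{\,j}$ is monomial, and its underlying permutation is the regular action of $e^if^j$, that is, of the group element $\sigma^i\tau^j$ on the index set $G$. This is visible in \eqref{eq:matrix}, where row $(i',j')$ is paired with column $(k,l)$ through $i'-k$ and $j'-l$. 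Distinct group elements therefore give monomial matrices with disjoint supports, since the regular representation is free.

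Consequently, for $(i,j)\neq(k,l)$ the entrywise (Hadamard) products of the two diagonal-layer matrices vanish. In vectorized form, $\mathrm{vec}(\bE\bD(\lambda))^H\mathrm{vec}(\bE'\bD(\mu))$ sums products over positions that are never simultaneously nonzero. So the off-diagonal Gram blocks are zero.

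For the diagonal blocks with $(i,j)=(k,l)$, we have $\mathbf{P}^H(\mathbf{I}\otimes\mathbf{M})\mathbf{P}=\mathrm{diag}$ of the squared moduli of the nonzero entries of $\bE_e^{\,i}\bE_f^{\,j}$. More precisely, $\|\bE\bD(\lambda)\|_F^2=\sum_g |\epsilon_g|^2|g(\lambda)|^2$, where the $\epsilon_g$ are the monomial entries. By \eqref{eq:matrix}, each $\epsilon_g$ is a product of conjugates of $a$, $b$ and $N_k(u)$. All of these have modulus $1$ by hypothesis, so this diagonal is the identity. Thus $\bG_{ij}^H\bG_{ij}=c\,\bR^H\bR$, which is a scalar multiple of $\mathbf{I}$ because $\bR$ is unitary up to scale. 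Combining the two cases gives $\bG^H\bG=\kappa\mathbf{I}$ for a scalar $\kappa$. Proposition~\ref{prop:IL} then yields information losslessness.

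The main obstacle is justifying rigorously that the supports of distinct $\bE_e^{\,i}\bE_f^{\,j}$ are disjoint. This must be checked from \eqref{eq:matrix}: the position of the nonzero entry in column $(k,l)$ is determined by $(i+k \bmod m,\ j+l \bmod 2)$, which is injective in $(i,j)$. It is also necessary to confirm that no modulus-changing factor, such as a non-unimodular power of $b$, slips in. The exponents of $a$ and $b$ in \eqref{eq:matrix} are only $0$ or $1$, and all factors are unimodular, so this check is immediate.
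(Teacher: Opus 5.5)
Your proof is correct and takes essentially the same route as the paper. The paper simply writes $\mathrm{vec}(\bX)=c^{1/2}\mathbf{P}\boldsymbol{\Phi}(\mathbf{I}_n\otimes\bR)\bs$, with $\mathbf{P}$ a permutation and $\boldsymbol\Phi$ a unit-modulus diagonal, and then invokes Proposition~\ref{prop:IL}; your Gram-matrix computation with the disjoint-support (free regular action) argument makes explicit why that factorization holds, namely that each entry of $\bX$ comes from exactly one layer times a unimodular constant.
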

\begin{proof}
By \eqref{eq:codeword}, $\mathrm{vec}(\bX)=c^{1/2}\mathbf{P}\boldsymbol{\Phi}(\mathbf{I}_n\otimes\bR)\bs$, where $\mathbf{P}$ is a permutation matrix and $\boldsymbol\Phi$ is diagonal with unit-modulus entries. Proposition~\ref{prop:IL} applies.
\end{proof}

\begin{proposition}[NVD with explicit bound]\label{prop:nvd}
Assume that $a,u\in\OO_L^\times$, $\Lambda\subseteq\OO_L$, and $b=\beta/\pi$ with $\beta\in\OO_L$ and $\pi\in\OO_K$. Let $\mathfrak{p}$ be a prime of $K$ and $e$ the ramification index of the primes of $L$ above it. If $v_{\mathfrak{P}}(b)\ge -1$ for every prime $\mathfrak{P}\mid\mathfrak p$ and $b$ is integral at all other primes, then $\nrd(x)\in\mathfrak{p}^{-\lfloor n/(2e)\rfloor}$ for all $x\in\bigoplus e^if^j\Lambda$. Consequently, $|\nrd(x)|^2\ge N(\mathfrak{p})^{-\lfloor n/(2e)\rfloor}$ for $x\ne0$ when $K=\Q(i)$.
\end{proposition}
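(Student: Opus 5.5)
The plan is to read $\nrd(x)=\det M_x$ off the block form \eqref{eq:block} of Theorem~\ref{thm:block}, bound the $\mathfrak{P}$-adic valuation of every term in the Leibniz expansion, and then descend from $L$ to $K$ using the ramification index.

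\emph{Integrality of the entries.} By \eqref{eq:matrix}, every entry of $M_x$ is $\sigma^k\tau^l(x_{ij})$ times products of $\tau^l(N_k(u))$, of $\tau^{\epsilon}(a)^{\lfloor(i+k)/m\rfloor}$, and of $b^{\lfloor(j+l)/2\rfloor}$. Since $a,u\in\OO_L^\times$ and $x_{ij}\in\Lambda\subseteq\OO_L$, and since $\OO_L$ and $\OO_L^\times$ are stable under $G$, every entry outside the upper-right block of \eqref{eq:block} lies in $\OO_L$. Every entry of the upper-right block lies in $b\,\OO_L$. Note that $b$ itself appears, not a Galois conjugate of it; this is exactly what \eqref{eq:matrix} gives, since the exponent $\lfloor(j+l)/2\rfloor$ multiplies $b$ directly.

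\emph{Counting factors of $b$.} In the Leibniz expansion $\det M_x=\sum_\pi\mathrm{sgn}(\pi)\prod_r [M_x]_{r,\pi(r)}$, a permutation $\pi$ maps some number $t$ of the $m$ top rows into the right $m$ columns. The upper-right block uses only top rows, so $t\le m=n/2$. Each term therefore lies in $b^t\OO_L$ with $0\le t\le n/2$. By hypothesis, $v_{\mathfrak P}(b)\ge-1$ at every $\mathfrak P\mid\mathfrak p$, and $b$ is integral at all other primes. Hence $v_{\mathfrak P}(\det M_x)\ge -n/2$ for $\mathfrak P\mid\mathfrak p$, and $\det M_x$ is integral at every other prime of $L$.

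\emph{Descent to $K$.} Since $\nrd(x)\in K$, for $\mathfrak P\mid\mathfrak p$ we have $v_{\mathfrak P}(\nrd x)=e\,v_{\mathfrak p}(\nrd x)$. This gives $v_{\mathfrak p}(\nrd x)\ge -n/(2e)$, and because the valuation is an integer, $v_{\mathfrak p}(\nrd x)\ge-\lfloor n/(2e)\rfloor$. At every other prime $\mathfrak q$ of $K$ we get $v_{\mathfrak q}(\nrd x)\ge0$, since integrality in $L$ of an element of $K$ implies integrality in $K$. Thus $\nrd(x)\in\mathfrak p^{-\lfloor n/(2e)\rfloor}$.

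\emph{The bound for $K=\Q(i)$.} Write $k=\lfloor n/(2e)\rfloor$. When $\nrd(x)\neq0$, the fractional ideal $(\nrd x)\,\mathfrak p^{k}$ is a nonzero integral ideal, so its absolute norm is at least $1$. Over $\Q(i)$, $N_{K/\Q}(\nrd x)=|\nrd x|^2$, so $|\nrd x|^2\ge N(\mathfrak p)^{-k}$. Nonvanishing for $x\ne0$ is where the division property enters: $\nrd(x)\neq0$ holds whenever $\cA$ is a division algebra, established for the concrete codes via Theorem~\ref{thm:centralizer}. I would state this dependence explicitly.

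The main obstacle is the bookkeeping that only the literal $b$ occurs in the upper-right block and nowhere else. If conjugates $\sigma^k(b)$ appeared, the per-prime valuation bound would need the hypothesis at all $\mathfrak P\mid\mathfrak p$, which is assumed anyway. So the argument is robust even if my reading of \eqref{eq:matrix} is slightly off.
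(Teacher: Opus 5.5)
Your proposal is correct and follows the paper's proof essentially step for step: $b$ is confined to the upper-right $\tfrac n2\times\tfrac n2$ block of \eqref{eq:block}, each Leibniz term uses at most $n/2$ entries of that block, and the descent uses $v_{\mathfrak P}=e\,v_{\mathfrak p}$ on $K$ together with integrality of the valuation. Your explicit remark that the bound for $x\ne0$ needs $\nrd(x)\ne0$, and hence the division property of the algebra, is a worthwhile addition, since the paper's proof leaves this dependence implicit.
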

\begin{proof}
By Theorem~\ref{thm:block}, $b$ occurs only in the upper-right $\tfrac n2\times\tfrac n2$ block of $M_x$, once in each entry. Every term of the Leibniz expansion of $\det M_x$ uses at most $n/2$ entries of that block, and therefore $v_{\mathfrak P}(\det M_x)\ge-n/2$. Since $\det M_x\in K$ and $v_{\mathfrak P}=e\,v_{\mathfrak p}$ on $K$, we obtain $v_{\mathfrak p}(\nrd x)\ge -n/(2e)$, and hence $v_{\mathfrak p}(\nrd x)\ge-\lfloor n/(2e)\rfloor$ because the valuation is an integer. At all other primes every entry is integral.
\end{proof}

\subsection{Encoding}
By \eqref{eq:codeword}, encoding requires only the $n$ layer vectors $\bR\bs_{ij}$ and a fixed placement of their entries. When $\bR=\bR_1\otimes\cdots\otimes\bR_r$ with $\bR_t\in\C^{n_t\times n_t}$, as for all codes in this paper (their layer bases are products of bases of cyclic subfields), the layer transform costs $n\sum_t n_t$ complex multiplications instead of $n^2$.

\begin{algorithm}[t]
\caption{Encoder for twisted Abelian crossed product codes}\label{alg:enc}
\begin{algorithmic}[1]
\Require symbols $\bs\in\mathcal{S}^{n^2}$ split into layers $\bs_{ij}\in\mathcal{S}^n$; layer factors $\bR_1,\dots,\bR_r$; precomputed placement $\pi_{ij}:G\to G$ and unit constants $\phi_{ij}(g)$ from \eqref{eq:block}
\Ensure codeword $\bX\in\C^{n\times n}$
\State $\bX\gets\mathbf{0}_{n\times n}$
\For{each layer $(i,j)$}
  \State $\mathbf{v}\gets(\bR_1\otimes\cdots\otimes\bR_r)\bs_{ij}$ \Comment{$n\sum_tn_t$ mults}
  \For{each $g\in G$} \Comment{column index}
     \State $\bX[\pi_{ij}(g),g]\gets\phi_{ij}(g)\,\mathbf{v}[g]$ \Comment{one mult per entry}
  \EndFor
\EndFor
\State \Return $c^{1/2}\bX$ \Comment{energy normalization}
\end{algorithmic}
\end{algorithm}

\begin{proposition}\label{prop:enccost}
Algorithm~\ref{alg:enc} computes a codeword with $n^2(1+\sum_tn_t)$ complex multiplications, compared with $n^4$ for a generic linear-dispersion code (dense $\bG$) and $n^3+n^2$ for a cyclic code with a dense layer matrix. For $n=8$ and $\bR=\bR_4\otimes\bR_2$ the three counts are $448$, $4096$ and $576$.
\end{proposition}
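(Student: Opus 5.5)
Proposition~\ref{prop:enccost} follows from counting the multiplications in the two loops of Algorithm~\ref{alg:enc}. There are $n$ layers $(i,j)$, since $|G|=n=2m$. For each layer, step 3 applies $\bR_1\otimes\cdots\otimes\bR_r$ to a length-$n$ vector. I will carry out this product as $r$ successive mode products: for factor $t$, reshape $\bs_{ij}$ as an $n_t\times(n/n_t)$ array and multiply by $\bR_t$. That step costs $n_t\cdot n_t\cdot(n/n_t)=n\,n_t$ multiplications, so the layer transform costs $n\sum_t n_t$. The inner loop over $g\in G$ then places each of the $n$ entries of $\mathbf{v}$ and multiplies it once by $\phi_{ij}(g)$, which costs $n$ further multiplications. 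Each layer therefore costs $n(1+\sum_t n_t)$, and all $n$ layers cost $n^2(1+\sum_tn_t)$.

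The algorithm is correct by \eqref{eq:codeword} and the remark after it. Each $\bE_e^i\bE_f^j$ is monomial, so $\bE_e^i\bE_f^j\bD(x_{ij})$ has exactly one nonzero entry in each column $g$. That entry is the $g$-th entry of $\bD(x_{ij})$ times a fixed constant, and its row is $\pi_{ij}(g)$. Moreover, $(\bD(x_{ij}))_{g}=g(x_{ij})=\sum_t s_{ij,t}g(\omega_t)=(\bR\bs_{ij})_g$ by \eqref{eq:layers}. Distinct layers occupy disjoint positions, because the $n$ monomial matrices $\bE_e^i\bE_f^j$ have different permutation patterns ($u_g$ sends coordinate $h$ to $gh$). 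Hence the assignments in step 5 fill each of the $n^2$ entries exactly once, and no additions are needed. The final scaling by $c^{1/2}$ is a single scalar and is absorbed into the constants $\phi_{ij}$, so it is not counted.

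For the comparison counts, a dense $\bG\in\C^{n^2\times n^2}$ acting on $\bs$ takes $n^4$ multiplications. For a cyclic code with a dense layer matrix, each of the $n$ layers costs $n^2$ for the dense transform plus $n$ for the constant placement, giving $n^3+n^2$. Substituting $n=8$ and $\sum_tn_t=4+2=6$ gives $64\cdot7=448$, $8^4=4096$ and $512+64=576$.

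The main obstacle is justifying the cost $n\sum_t n_t$ for the Kronecker transform. This is the standard fact that $(\bR_1\otimes\cdots\otimes\bR_r)\mathrm{vec}(\cdot)$ factors as $\prod_t(\mathbf{I}\otimes\bR_t\otimes\mathbf{I})$. Each factor is block diagonal up to a permutation, with $n/n_t$ copies of $\bR_t$. I will state this identity explicitly and note that the permutations cost nothing.
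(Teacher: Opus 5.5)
Your proof is correct and uses the same count as the paper. The paper gives no separate proof, only the per-step counts annotated in Algorithm~\ref{alg:enc} and the paragraph before it: $n$ layers, each costing $n\sum_t n_t$ for the Kronecker-factored transform plus $n$ placement multiplications, with the dense and cyclic comparison counts obtained the same way. Your additions fill in what the paper leaves implicit: the explicit factorization $\prod_t(\mathbf{I}\otimes\bR_t\otimes\mathbf{I})$, the disjoint-placement argument via the monomial matrices, and absorbing $c^{1/2}$ into the constants $\phi_{ij}$, which the stated total tacitly requires since the paper leaves the final scaling out of its count.
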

Each entry of a codeword depends on exactly one layer; this is the precise sense in which the tensor structure of \cite{Barzegar} simplifies encoding. The structure does \emph{not} reduce the worst-case complexity of ML decoding, which remains $|\mathcal{S}|^{n^2}$ for exhaustive search, as for all full-rate $n\times n$ codes without fast-decodable structure \cite{Biglieri2009,JithamithraRajan}.

\section{The $4\times4$ Code $\mathcal{C}_4$}\label{sec:C4}
We first apply the construction with $m=2$. The field is chosen according to Corollary~\ref{cor:Z23}(2), which requires two odd primes with $e=f=2$.

\subsection{Construction and explicit codewords}
Let $K=\Q(i)$ and $L=K(\sqrt5,\zt_8)=\Q(i,\sqrt2,\sqrt5)$, the field of the simulation example in \cite{Barzegar}. Let $\sigma$ fix $\sqrt5$ with $\sigma(\zt_8)=-\zt_8$, and let $\tau$ fix $\zt_8$ with $\tau(\sqrt5)=-\sqrt5$. We take
\begin{equation}\label{eq:C4par}
a=\frac{\sqrt5}{2+i},\qquad b=\zt_8,\qquad u=i.
\end{equation}
Then $a\in L^\sigma$, $b\in L^\tau$, $N_\sigma(u)=u^2=-1=a/\tau(a)$ and $u\tau(u)=-1=\sigma(b)/b$; hence \eqref{eq:cocycle} holds. All three constants have modulus one under every embedding.

For the layers we use the ideal $\alpha\OO_L$ with $\theta=(1+\sqrt5)/2$ and $\alpha=1+i-i\theta$, which satisfies $\alpha\bar\alpha=2+i$ (the Golden code ideal \cite{Golden}). Its basis $\{\alpha,\alpha\theta\}\otimes\{1,\zt_8\}$ has embedding matrix $\bR=\bR_5\otimes\bR_8$ with
\[
\bR_5=\tfrac{1}{\sqrt5}\begin{bmatrix}\alpha&\alpha\theta\\ \bar\alpha&\bar\alpha\bar\theta\end{bmatrix},\qquad
\bR_8=\tfrac{1}{\sqrt2}\begin{bmatrix}1&\zt_8\\1&-\zt_8\end{bmatrix},
\]
both unitary, where $\bar\alpha=\tau(\alpha)=1+i-i\bar\theta$ and $\bar\theta=(1-\sqrt5)/2$. By Proposition~\ref{prop:unitary}, $\mathcal{C}_4$ is information-lossless and has cubic shaping. Writing $x_0,x_1,x_2,x_3$ for the coefficients of $1,e,f,ef$, Theorem~\ref{thm:block} gives the codewords
\begin{equation}\label{eq:C4cw}
\bX=\frac{1}{\sqrt{10}}\begin{bmatrix}
x_0 & a\,\sigma(x_1) & \zt_8\,\tau(x_2) & i a\zt_8\,\sigma\tau(x_3)\\
x_1 & \sigma(x_0) & \zt_8\,\tau(x_3) & i\zt_8\,\sigma\tau(x_2)\\
x_2 & -ia\,\sigma(x_3) & \tau(x_0) & -a\,\sigma\tau(x_1)\\
x_3 & i\,\sigma(x_2) & \tau(x_1) & \sigma\tau(x_0)
\end{bmatrix},
\end{equation}
with
\begin{equation}\label{eq:C4layers}
x_j=\alpha\big(s_{j,1}+s_{j,2}\theta+s_{j,3}\zt_8+s_{j,4}\theta\zt_8\big),\qquad s_{j,t}\in\mathcal{S},
\end{equation}
and with $\sigma(\zt_8)=-\zt_8$, $\tau(\theta)=\bar\theta$, $\tau(\alpha)=\bar\alpha$. Each codeword carries $16$ QAM symbols.

\subsection{Division property and minimum determinant}
\begin{theorem}\label{thm:C4div}
$\cA_4$ is a division algebra of degree $4$, and its local invariants at $2\pm i$ have order $4$.
\end{theorem}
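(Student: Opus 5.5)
The plan is to apply Theorem~\ref{thm:centralizer} with $H=\langle\sigma\rangle\cong\Z/2$. The only prime dividing $|G|=4$ is $2$, and it divides $|H|$, so the prime hypothesis holds. The fixed field is $F=L^\sigma=K(\sqrt5)=\Q(i,\sqrt5)$, and by (F4) the centralizer is the quaternion algebra $\mathcal{B}=(L/F,\sigma,a)$ with $a=\sqrt5/(2+i)\in F$. It then suffices to exhibit a place $v$ with $G_v=G$ at which $\inv_w(\mathcal{B})$ has order $2$. I will do this for $v=(2+i)$ and, symmetrically, for $v=(2-i)$. That Theorem~\ref{thm:centralizer} yields an invariant of order $4$ at each of these places, not just the division property, is visible from its proof: it shows that $\inv_v\cA$ itself has order $|G|$.

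\emph{Step 1: full decomposition group.} Let $v=(2\pm i)$, so $K_v=\Q_5$ with residue field $\mathbb{F}_5$. In $K(\sqrt5)/K$ the prime $v$ ramifies, because $5=(2+i)(2-i)$ and $v$ divides the discriminant of $x^2-5$ exactly once; hence $e\ge2$. The subextension $K(\sqrt2)=K(\zt_8)$ is unramified at $v$. Since $2$ is a non-square modulo $5$, it is inert there, so $f\ge2$. Therefore $[L_w:K_v]=4$ and $G_v=G$, in agreement with Corollary~\ref{cor:Z23}(2).

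\emph{Step 2: the local invariant of $\mathcal{B}$.} Let $w$ be the unique place of $F$ above $v$. The extension $L_w/F_w$ is generated by $\sqrt2$, and since $\sqrt5$ has already absorbed the ramification, it is the unramified quadratic extension. For an unramified extension the norm group is exactly the set of elements of even valuation, so by (F6) the class of $(L_w/F_w,\sigma,a)$ has order $2$ precisely when $v_w(a)$ is odd. Normalize $v_w$ on $F_w$. Then $v_w(\sqrt5)=1$, since $(\sqrt5)^2=5$ and $v_w(5)=e=2$.
\begin{itemize}
\item At $v=(2+i)$ we have $v_w(2+i)=2$, so $v_w(a)=1-2=-1$.
\item At $v=(2-i)$ we have $v_w(2+i)=0$, so $v_w(a)=1$.
\end{itemize}
Both values are odd, so $\inv_w(\mathcal{B})=1/2$, which has order $|H|$. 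Lemma~\ref{lem:tame} is not needed here, because $L_w/F_w$ is unramified rather than totally ramified; the valuation criterion replaces it.

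\emph{Step 3: conclusion.} Theorem~\ref{thm:centralizer} gives that $\cA_4$ is a division algebra and that $\inv_v\cA_4=j/4$ with $j$ odd, at both $v=2+i$ and $v=2-i$. As a consistency check against (F2), the two invariants are then $\pm1/4$ with opposite signs, and all other places contribute $0$.

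The step that needs the most care is Step~2. One must verify that $L_w/F_w$ is genuinely unramified, that is, that $\zt_8$ generates an unramified extension over $F_w$ and not one whose ramification is entangled with that of $\sqrt5$. One must also get the valuation of $a$ right in the ramified field $F_w$. I would confirm the first point by checking that $L=F(\sqrt2)$ and that $2$ is a unit at $w$ whose residue lies in $\mathbb{F}_5$ and is a non-square there.
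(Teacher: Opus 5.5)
Your proof is correct and is essentially the paper's argument. You use the same $H=\langle\sigma\rangle$ and the same centralizer $(L/F,\sigma,a)=(i,a)_{K(\sqrt5)}$, and your criterion (since $L_w/F_w$ is unramified quadratic, $a$ is a non-norm iff $v_w(a)$ is odd) is another phrasing of the paper's tame Hilbert-symbol computation $(i,a)_w=-1$.

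The one loose end is your closing ``consistency check'' (opposite signs, all other invariants zero), which does not follow from what you proved. The dyadic place $(1+i)$ also has full decomposition group, so (F2) alone only gives $\inv_{1+i}\cA_4\in\{0,\tfrac12\}$. The claim is nonetheless true: $e^{-1}fe=f^5$ and $e^{-1}\zt_8e=-\zt_8$, so conjugation by $e$ induces the same Frobenius on the residue field $\mathbb{F}_{625}$ at both places, while $e$ has valuation $-1$ at $2+i$ and $+1$ at $2-i$. In any case the theorem does not need it.
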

\begin{proof}
We verify the two places required by Corollary~\ref{cor:Z23}. Since $5=(2+i)(2-i)$, $\sqrt5$ ramifies at both primes ($e=2$). The element $\zt_8$ generates an extension that is inert at both, because $\zt_8^2=i\equiv 3$ (resp.\ $2$) modulo $2+i$ (resp.\ $2-i$), and these residues are non-squares in $\mathbb{F}_5$. Hence $G_v=G$ for $v=(2\pm i)$. We apply Theorem~\ref{thm:centralizer} with $H=\langle\sigma\rangle$ and $F=L^\sigma=K(\sqrt5)$. The centralizer $\mathcal{B}=(L/F,\sigma,a)$ is the quaternion algebra $(i,a)_F$, because $L=F(\sqrt i)$. At the prime $\mathfrak{p}$ of $F$ above $2+i$, the element $i$ is a unit and $v_{\mathfrak p}(a)=v_{\mathfrak p}(\sqrt5)-v_{\mathfrak p}(2+i)=1-2=-1$ is odd. The tame Hilbert symbol is therefore the Legendre symbol of $i\equiv3$ in $\mathbb{F}_5$, which equals $-1$. Consequently, $\inv_{\mathfrak p}\mathcal{B}=\tfrac12$ has order $|H|=2$, and $\cA_4$ is a division algebra. The same computation at $2-i$, where $i\equiv2$, shows that the invariant there also has order $4$, as Theorem~\ref{thm:local} requires. This agrees with the criterion of \cite{BerhuyOggier2007}: the quaternion algebra $(-5,i)_{\Q(i)}$ ramifies exactly at $2\pm i$.
\end{proof}

\begin{theorem}\label{thm:C4det}
$\delta_{\min}(\mathcal{C}_{4,\infty})=1/2000$.
\end{theorem}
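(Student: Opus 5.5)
Since $\bX=M_x/\sqrt{10}$ is $4\times4$, we have $|\det\bX|^2=|\nrd(x)|^2/10^4$. The statement is therefore equivalent to $\min_{x\ne0}|\nrd(x)|^2=5$ over $x=\sum_{i,j}e^if^jx_{ij}$ with every $x_{ij}\in\alpha\Lambda_0$, where $\Lambda_0=\Z[i]\langle1,\theta,\zt_8,\theta\zt_8\rangle$. I would prove the lower bound $|\nrd(x)|^2\ge5$ by a valuation argument and the upper bound by exhibiting a codeword that attains it.

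\emph{Lower bound.} Proposition~\ref{prop:nvd} does not apply directly, because $a=\sqrt5/(2+i)$ is not integral. I would instead argue as follows.

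First, factor out $\alpha$. Write $x_{ij}=y_{ij}\alpha$, so that $x=y\alpha$ with $y=\sum e^if^jy_{ij}$ and $y_{ij}\in\Lambda_0\subseteq\OO_L$. By multiplicativity, $\nrd(x)=\nrd(y)\,\nrd(\alpha)=\nrd(y)\,N_{L/K}(\alpha)$. Since $N_{F/K}(\alpha)=\alpha\bar\alpha=2+i$, this gives $N_{L/K}(\alpha)=(2+i)^2$.

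Next, bound the valuations of $\nrd(y)\in K$ using the block form \eqref{eq:block}, i.e.\ the explicit matrix \eqref{eq:C4cw} with $\alpha$ removed.
\begin{itemize}
\item The constants $b=\zt_8$, $u=i$ and $\tau(u)$ are units.
\item The constants $a$ and $\tau(a)=-a$ have valuation $-1$ at the prime $\mathfrak P$ of $L$ above $2+i$ (where $e=2$) and valuation $+1$ above $2-i$. Hence every entry of $M_y$ is integral away from $2+i$.
\item The entries containing $a$ occur only in positions $(0,1),(0,3),(2,1),(2,3)$, that is, in rows $\{0,2\}$ and columns $\{1,3\}$. Each Leibniz term therefore picks up at most two of them.
\end{itemize}
It follows that $v_{\mathfrak P}(\nrd y)\ge-2$, so $v_{2+i}(\nrd y)\ge-1$, and $\nrd(y)$ is integral at every other prime. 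Consequently $\nrd(x)\in(2+i)^2(2+i)^{-1}\Z[i]=(2+i)\Z[i]$. Theorem~\ref{thm:C4div} gives $\nrd(x)\ne0$ for $x\ne0$, so $|\nrd(x)|^2\ge N(2+i)=5$ and $|\det\bX|^2\ge5/10^4=1/2000$.

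\emph{Attainment.} The obvious candidates are too large. For $x=\alpha$ we get $|\nrd|^2=25$. For $x=e\alpha$ we get the same value, because $\nrd(e)^2=N_{L/K}(a)=\big((2-i)/(2+i)\big)^2$, so $\nrd(e)$ has modulus $1$.

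The target is $v_{2+i}(\nrd y)=-1$ exactly, with $\nrd(y)$ a unit elsewhere. This requires Leibniz terms that use two $a$-entries. I would first search inside the centralizer $\mathcal B=F\oplus eF$. For $y=y_0+e y_1$ with $y_0,y_1\in\Lambda_0$, the reduced norm is $\nrd_{\cA}(y)=N_{F/K}\big(\nrd_{\mathcal B}(y)\big)$, where $\nrd_{\mathcal B}(y_0+ey_1)=N_{L/F}(y_0)-a\,N_{L/F}(y_1)$. Small choices such as $y_1=1$ and $y_0\in\{1,\theta,1+\zt_8,\dots\}$ should give an element of valuation exactly $-1$ at $\mathfrak p$ whose $F/K$-norm has absolute value squared $1/5$ after multiplying by $|(2+i)^2|^2=25$. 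If this subalgebra does not suffice, I would run a short exhaustive search over small Gaussian-integer symbol vectors using the closed form \eqref{eq:C4cw}.

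\emph{Main obstacle.} The lower bound is routine. The hard step is certifying an explicit minimizer with $|\nrd(x)|^2=5$ exactly, since this requires an exact symbolic evaluation of the degree-4 reduced norm.
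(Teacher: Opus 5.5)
Your lower bound is correct and is essentially the paper's argument. The paper notes that every entry of $M_x$ lies in $\mathfrak P$ times a constant, so each Leibniz term has $v_{\mathfrak P}\ge 4-2=2$. You instead pull $\alpha$ out through $\nrd(y\alpha)=\nrd(y)\,(2+i)^2$ and show $v_{\mathfrak P}(\nrd y)\ge -2$. The two bookkeepings are equivalent, and both give $\nrd(x)\in(2+i)\Z[i]\setminus\{0\}$.

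The gap is the equality. The theorem asserts that $1/2000$ is attained, but your proposal never exhibits an element with $|\nrd(x)|^2=5$. It only says that small choices ``should'' work, and the ones you name do not. With $y_1=1$ your own setup gives $\nrd(x)=N_{F/K}\big((2+i)N_{L/F}(y_0)-\sqrt5\big)$. The choices $y_0=1,\theta,1+\zt_8$ then give $2i(2+i)$, $(2+i)(5+2i)$ and $-3i(2+i)$, that is, $|\nrd(x)|^2=20,145,45$. Without a witness, the lower bound does not rule out a true minimum of $20$ or $25$.

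The paper closes this with the explicit element $x=-\alpha-i\alpha\theta\zt_8+e\,(-i\alpha\theta\zt_8)$. It does lie in your centralizer, which is $L\oplus eL$, not $F\oplus eF$ as you wrote. Here $y_0=-1-i\theta\zt_8$ and $y_1=-i\theta\zt_8$, so $\nrd_{\mathcal B}(y)=1+i\theta^2(1-a)$ and $\nrd(x)=-(2+i)$.

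Since $y_1$ is a unit of the ring $\Lambda_0$, right-multiplying by $y_1^{-1}$ puts this into your normalized form $y_1=1$, $y_0=1+\bar\theta\zt_8$, i.e.\ $x_0=\alpha(1+\zt_8-\theta\zt_8)$ and $x_1=\alpha$. Then $(2+i)N_{L/F}(y_0)-\sqrt5=(5-3i)+(-3+2i)\theta$. The formula $N_{F/K}(A+B\theta)=A(A+B)-B^2$ gives exactly $2+i$, so $\nrd(x)=2+i$.

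Inside $L\oplus eL$, the ``exact symbolic evaluation'' you flag as the main obstacle reduces to this two-line computation. It still has to be carried out for the proof to be complete.
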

\begin{proof}
In \eqref{eq:C4cw}, $\bX=M_x/\sqrt{10}$ with $x_j\in\alpha\OO_L$. The prime $(\alpha)$ of $K(\sqrt5)$ above $2+i$ is inert in $L$; we denote it by $\mathfrak{P}$, with $e(\mathfrak P|2+i)=2$. Every entry of $M_x$ lies in $\mathfrak P$ times a constant. By \eqref{eq:C4cw}, the constant $a$, with $v_{\mathfrak P}(a)=-1$, appears in at most two entries of any Leibniz term (it occurs only in column~$2$ and in column~$4$ of rows $1$ and $3$), while $b$ and $u$ are units. Each Leibniz term therefore has $v_{\mathfrak P}\ge 4-2=2$, which gives $(2+i)\mid\nrd(x)$. At the prime above $2-i$ the constant $a$ has positive valuation, and at all other primes every entry is integral. Hence $\nrd(x)\in(2+i)\Z[i]\setminus\{0\}$, $|\nrd(x)|^2\ge5$, and $|\det\bX|^2=|\nrd(x)|^2/10^4\ge1/2000$. Equality is attained by $x=-\alpha-i\alpha\theta\zt_8+e\,(-i\alpha\theta\zt_8)$, for which $|\nrd(x)|^2=5$.
\end{proof}

\begin{remark}
The $4\times4$ perfect code has $\delta_{\min}=1/1125$ \cite{Oggier2006}; the asymptotic coding gain of $\mathcal{C}_4$ is therefore $10\log_{10}(2000/1125)/4=0.62$~dB lower. The loss stems from the denominator of $a$. Over $\Q$, Remark~1 of \cite{LuzziOggier2011} shows that unit-modulus constants are incompatible with the division property in degree four. Over $\Q(i)$, every unit-modulus choice of $(a,b,u)$ in this family that we examined (all associates of $\sqrt5/(2\pm i)$, $\zt_8$ and $\pm i$) gives exactly $1/2000$. We include $\mathcal{C}_4$ as a fully specified algebraic Abelian non-cyclic $4\times4$ code over $\Q(i)$ with unit-modulus constants, NVD, cubic shaping and information losslessness, and as a validation of the theory. Section~\ref{sec:spectrum} shows that its determinant spectrum is nevertheless more favorable than that of the perfect code.
\end{remark}

\section{The $8\times8$ Codes}\label{sec:C8}
For eight antennas, Corollary~\ref{cor:Z23} excludes $(\Z/2)^3$ and leaves $\Z/4\times\Z/2$, which we realize with $m=4$. Unlike the cyclic case, the construction allows the commutator and cyclic constants to be roots of unity.

\subsection{The code $\mathcal{C}_8$ and its explicit codewords}\label{sec:C8def}
Let $K=\Q(i)$ and $L=\Q(\zt_{40})=K(\zt_5,\zt_8)$, with $G=\Gal(L/K)=\langle\sigma\rangle\times\langle\tau\rangle$. Here $\sigma:\zt_{40}\mapsto\zt_{40}^{17}$ has order $4$, with $\sigma(\zt_5)=\zt_5^2$ and $\sigma(\zt_8)=\zt_8$, and $\tau:\zt_{40}\mapsto\zt_{40}^{21}$ has order $2$, with $\tau(\zt_5)=\zt_5$ and $\tau(\zt_8)=-\zt_8$. Let
\begin{equation}
\beta_0=\zt_5+i\zt_5^2-\zt_5^4-i\zt_5^3,\qquad \sigma(\beta_0)=-i\beta_0,
\end{equation}
a quartic Gauss sum with $|\beta_0|^2=5$ and $\beta_0^4=-15+20i=-(2-i)^3(2+i)$. We set
\begin{equation}\label{eq:C8par}
u=\zt_8,\qquad a=\zt_8^{-1},\qquad b=\frac{\beta_0}{2-i}.
\end{equation}
Then $a\in L^\sigma=\Q(\zt_8)$, $b\in L^\tau=K(\zt_5)$, $N_\sigma(u)=\zt_8^4=-1=a/\tau(a)$ and $u\tau(u)=-i=\sigma(b)/b$; hence \eqref{eq:cocycle} holds. The constants $u$ and $a$ are roots of unity, and $|b|=1$ under every embedding.

Since $\sigma$ fixes $\zt_8$, $N_k(u)=\zt_8^k$, and Theorem~\ref{thm:block} with $\tau(a)=-\zt_8^{-1}$ gives the codewords
\begin{equation}\label{eq:C8cw}
\bX=\frac{1}{2\sqrt2}\begin{bmatrix}
C_{\zt_8^{-1}}\big(\mathbf{x}^{(0)}\big) & b\,C_{\zt_8^{-1}}\big(\tau(\mathbf{x}^{(1)})\big)\,\bar{\boldsymbol\Lambda}\\[2pt]
C_{-\zt_8^{-1}}\big(\mathbf{x}^{(1)}\big)\,\boldsymbol\Lambda & C_{-\zt_8^{-1}}\big(\tau(\mathbf{x}^{(0)})\big)
\end{bmatrix},
\end{equation}
where $\boldsymbol\Lambda=\mathrm{diag}(1,\zt_8,i,\zt_8^3)$, $\bar{\boldsymbol\Lambda}=\tau(\boldsymbol\Lambda)=\mathrm{diag}(1,-\zt_8,i,-\zt_8^3)$, and $C_\gamma(\cdot)$ is the $4\times4$ cyclic block \eqref{eq:cyclicblock} with $\sigma(\zt_5)=\zt_5^2$. The eight layers are
\begin{equation}\label{eq:C8layers}
x_{ij}=\sum_{t=0}^{3}\big(s_{ij,t,0}+s_{ij,t,1}\zt_8\big)\zt_5^{\,t},\qquad s_{ij,t,l}\in\mathcal{S},
\end{equation}
with $\mathbf{x}^{(0)}=(x_{00},x_{10},x_{20},x_{30})$ the coefficients of $1,e,e^2,e^3$ and $\mathbf{x}^{(1)}=(x_{01},\dots,x_{31})$ those of $f,ef,e^2f,e^3f$. Each codeword carries $64$ QAM symbols, and the factor $1/(2\sqrt2)=c^{1/2}$ normalizes the energy (Theorem~\ref{thm:C8det}).

\subsection{Division property and minimum determinant}
\begin{theorem}\label{thm:C8div}
$\cA_8$ is a division algebra of degree $8$.
\end{theorem}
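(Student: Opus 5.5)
We will apply Theorem~\ref{thm:centralizer} with $H=\langle\sigma\rangle\cong\Z/4$. The only prime dividing $|G|=8$ is $2$, and it divides $|H|=4$, so the hypothesis on primes holds. The fixed field is $F=L^H=L^\sigma=K(\zt_8)=\Q(\zt_8)$. The centralizer of $F$ is the subalgebra $\bigoplus_{k<4}e^kL$. Since $e^4=a$, this is the cyclic algebra $\mathcal{B}=(L/F,\sigma,a)=(F(\zt_5)/F,\sigma,\zt_8^{-1})$, as in (F4). Note that $b$ and $u$ do not enter $\mathcal{B}$ at all. It therefore remains to find a finite place $v$ of $K$ with $G_v=G$ at which $\inv_w\mathcal{B}$ has order $4$.

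\emph{Step 1: choice of place.} Take $v=(2-i)$; the computation at $2+i$ is identical, and in fact both will work, consistent with Theorem~\ref{thm:local}. The residue field of $K_v$ is $\mathbb{F}_5$.
\begin{itemize}
\item In $K(\zt_5)/K$ the prime $v$ is totally and tamely ramified of degree $4$, since $5$ is totally ramified in $\Q(\zt_5)$ and $5\nmid 4$.
\item In $K(\zt_8)/K$ the prime $v$ is inert of degree $2$. Indeed $5$ has order $2$ modulo $8$, so $\mathbb{F}_5$ contains no primitive $8$th root of unity; equivalently $\zt_8^2=i\equiv 2$ is a non-square modulo $2-i$.
\end{itemize}
These two extensions are linearly disjoint, so $e=4$, $f=2$ and $ef=8$, which gives $G_v=G$. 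This matches the metacyclic remark after Corollary~\ref{cor:Z23}. Let $w$ be the unique place of $F$ above $v$. Its residue field is $\mathbb{F}_{25}$, and $L_{w'}/F_w$ is cyclic, totally and tamely ramified of degree $m=4$, with $4\mid 24$.

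\emph{Step 2: the tame symbol.} The constant $a=\zt_8^{-1}$ is a unit, so Lemma~\ref{lem:tame} applies with $q=25$ and $m=4$. The order of $[\mathcal{B}\otimes F_w]$ equals the multiplicative order of $\bar a^{(q-1)/m}=\bar\zt_8^{-6}=\bar\zt_8^{\,2}=\bar{i}$ in $\mathbb{F}_{25}^\times$. Reduction modulo $w$ is injective on roots of unity of order prime to $5$, so $\bar i$ has order exactly $4$. Hence $\inv_w\mathcal{B}$ has order $4=|H|$, and Theorem~\ref{thm:centralizer} yields that $\cA_8$ is a division algebra of degree $8$.

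\emph{Main obstacle.} The delicate point is Step 1 together with the identification of the centralizer: the place must have full decomposition group, and $L/F$ must be totally tamely ramified at $w$. One must also check that $\sigma$, as the generator used in $e^4=a$, matches the Galois action, so that the cyclic algebra in Lemma~\ref{lem:tame} is exactly $(L/F,\sigma,a)$ with no twist from the cocycle. The second issue is settled by the relation $\lambda e=e\sigma(\lambda)$ in \eqref{eq:twisted}. The first is settled by the ramification data of $\zt_5$ and $\zt_8$ above $5$ described in Step 1.
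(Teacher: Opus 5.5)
Your proposal is correct and follows the paper's proof essentially step for step. You apply Theorem~\ref{thm:centralizer} with $H=\langle\sigma\rangle$ and $F=\Q(\zt_8)$, obtain the full decomposition group from $e=4$ and $f=2$ above $5$, and use Lemma~\ref{lem:tame} with $q=25$, $m=4$, where $\bar a^{6}=\bar\zt_8^{\,2}$ has order $4$. The only difference is that you work at $2-i$ instead of the paper's $2+i$. This does not matter, because the computation involves only $a=\zt_8^{-1}$ and the residue field $\mathbb{F}_{25}$, and not $b$.
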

\begin{proof}
At $v=(2+i)$, $\zt_5$ generates a totally and tamely ramified extension of degree $4$, and $\zt_8$ generates an unramified extension of degree $2$ ($5$ has order $2$ in $(\Z/8)^\times$). Hence $G_v=G$, and $\Z/4\times\Z/2$ occurs with $e=4$ and $f=2$, as permitted by Corollary~\ref{cor:Z23}. We take $H=\langle\sigma\rangle$, for which $|G|=8$ and $|H|=4$ have the same prime divisors, and $F=L^\sigma=\Q(\zt_8)$. The centralizer is the cyclic algebra $\mathcal{B}=(L/F,\sigma,a)$, and $L_w/F_w$ is totally and tamely ramified of degree $4$ with residue field $\mathbb{F}_{25}$. Reduction is injective on roots of unity of order prime to $5$; consequently $\bar a=\bar\zt_8^{-1}$ has order $8$ in $\mathbb{F}_{25}^\times$, and $\bar a^{(25-1)/4}=\bar\zt_8^{-6}=\bar\zt_8^{\,2}$ has order $4$. By Lemma~\ref{lem:tame}, $\inv_w\mathcal{B}$ has order $4=|H|$, and Theorem~\ref{thm:centralizer} completes the proof.
\end{proof}

Theorem~\ref{thm:C8div} does not depend on $b$. In contrast with cyclic codes, the division property is carried by the root-of-unity constant $a$, and $b$ can be chosen freely to optimize the determinant.

\begin{theorem}\label{thm:C8det}
With the basis \eqref{eq:C8layers}, the infinite code satisfies $\min_{x\ne0}|\det M_x|^2=1/5$ exactly. After energy normalization ($c=1/8$), $\delta_E(\mathcal{C}_8)=1/(5\cdot2^{24})$.
\end{theorem}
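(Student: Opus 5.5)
The plan is to prove the lower bound $|\det M_x|^2\ge 1/5$ by a valuation argument in the spirit of Proposition~\ref{prop:nvd}, and then to exhibit an explicit element attaining it.

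\emph{Where $b$ can have a denominator.} The layers \eqref{eq:C8layers} lie in $\Z[\zt_5,\zt_8]=\OO_L$, and $a=\zt_8^{-1}$ and $u=\zt_8$ are units. So the only possible denominator in $M_x$ comes from $b=\beta_0/(2-i)$, and only at primes above $2\pm i$. Both $2+i$ and $2-i$ are totally ramified in $K(\zt_5)/K$ with $e=4$, and they are inert in $K(\zt_8)$, so $e=4$ in $L/K$.

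\emph{Local valuations of $b$.} From $\beta_0^4=-(2-i)^3(2+i)$, at the prime $\mathfrak P$ of $L$ above $2-i$ we get $4v_{\mathfrak P}(\beta_0)=3\cdot4$. Hence $v_{\mathfrak P}(\beta_0)=3$ and $v_{\mathfrak P}(b)=3-4=-1$. At the prime above $2+i$ we get $v(\beta_0)=1\ge0$, so $b$ is integral there.

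\emph{Lower bound.} Proposition~\ref{prop:nvd} applies with $n=8$ and $e=4$. It gives $\nrd(x)\in(2-i)^{-\lfloor 8/8\rfloor}\Z[i]$, so $|\nrd(x)|^2\ge N(2-i)^{-1}=1/5$ for $x\ne0$. Here $\nrd(x)\ne0$ by Theorem~\ref{thm:C8div}, and I use that $\det M_x=\nrd(x)$ is computed in the unnormalized matrix $M_x$.

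\emph{Attainment.} This is the main obstacle. I need $x\in\bigoplus e^if^j\OO_L$ (with layers in the lattice \eqref{eq:C8layers}) such that $\nrd(x)=\epsilon/(2-i)$ for a unit $\epsilon$. The obvious candidates do not work. For instance, $\nrd(1)=1$. For $x=f$ we have $\nrd(f)^2=\nrd(f^2)=\nrd(b)=N_{L/K}(b)$. Using $\sigma(\beta_0)=-i\beta_0$, the norm $N_{K(\zt_5)/K}(\beta_0)$ is a unit multiple of $\beta_0^4$, so $N_{L/K}(b)$ is a unit multiple of $(2+i)^2/(2-i)^2$. Hence $|\nrd(f)|^2=1$, with no useful denominator.

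I would therefore search over sparse elements $y+fz$, with $y,z$ having few small nonzero layer coefficients, for instance $y=1$ and $z=\zt_5^t$ or $1-\zt_5$. For each candidate I compute $\det$ of the block form \eqref{eq:C8cw}. The block structure helps: when the lower-left block is invertible, a Schur complement reduces the computation to a $4\times4$ determinant. A candidate works when its value, times $(2-i)$, is a unit of $\Z[i]$. The valuation count guides the search: the denominator $(2-i)^{-1}$ survives only if the four $b$-entries in the Leibniz expansion are not compensated by $\mathfrak P$-divisibility of the remaining entries. So $z$ should be a $\mathfrak P$-unit, and $y$ should be arranged so that the terms cancel down to norm $5^{-1}$. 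Once a witness is found, its determinant is an exact algebraic computation in $\Z[i][1/(2-i)]$, which certifies equality.

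\emph{Energy normalization.} Each layer has the basis $\{\zt_5^t,\zt_5^t\zt_8\}$, whose embedding matrix has $8\times8$ entries of modulus $1$. By \eqref{eq:codeword}, every column of $\bG$ is a placed, unit-modulus-scaled copy of a column of that embedding matrix. So $\|\bG\|_F^2=64\cdot8=512$ and $c=n^2/\|\bG\|_F^2=64/512=1/8$, which is the factor $c^{1/2}=1/(2\sqrt2)$ in \eqref{eq:C8cw}. By \eqref{eq:deltaE}, $\delta_E=c^8\cdot\tfrac15=1/(5\cdot2^{24})$.
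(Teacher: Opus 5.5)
\paragraph{Assessment.} Your lower bound and your energy normalization coincide with the paper's proof. For the bound you compute $v_{\mathfrak P}(b)=-1$ above $2-i$ and $v(b)=1$ above $2+i$, then apply Proposition~\ref{prop:nvd} with $n=8$, $e=4$. For the normalization you get $\|\bG\|_F^2=512$, $c=1/8$.

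\paragraph{The gap: attainment.} The theorem asserts that the minimum is \emph{exactly} $1/5$, and hence that $\delta_E$ equals $1/(5\cdot 2^{24})$ rather than merely being bounded below by it. This requires an explicit nonzero $x$ in the lattice with $|\nrd(x)|^2=1/5$. You describe a search but never produce such an element, so what you have proved is only $\min|\det M_x|^2\ge 1/5$. The paper closes this step with the two-symbol witness $x=-\zt_8+f\,(i\zt_5^3)$.

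\paragraph{Certifying a candidate.} For $x=y+fz$ with $y,z\in L$ you do not need an $8\times8$ determinant or a Schur complement. The subalgebra $L\oplus fL$ is the centralizer of $E=L^\tau=K(\zt_5)$, a quaternion algebra over $E$. Hence
\[
\nrd(y+fz)=N_{E/K}\bigl(N_{L/E}(y)-b\,N_{L/E}(z)\bigr).
\]
Equivalently, for such $x$ all four blocks of \eqref{eq:C8cw} are diagonal. The determinant then splits into $2\times2$ factors $\sigma^k\bigl(N_{L/E}(y)-bN_{L/E}(z)\bigr)$, using $N_k(u)\tau(N_k(u))=(-i)^k$ and $\sigma^k(b)=(-i)^kb$.

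For the paper's witness, $N_{L/E}(-\zt_8)=-i$ and $N_{L/E}(i\zt_5^3)=-\zt_5$. So $\nrd(x)=N_{E/K}(w)/(2-i)^4$ with $w=-i(2-i)+\beta_0\zt_5$. Your valuation count gives $v_{\mathfrak P}(w)=3$, but that fixes only the $(2-i)$-part of $N_{E/K}(w)$. You must still rule out other prime factors, i.e.\ verify that $N_{E/\Q}(w)=125$.

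\paragraph{Why this check is needed.} Your heuristic (choose $z$ to be a $\mathfrak P$-unit) does not suffice. For your first candidate $x=1+f$ one gets $w=(2-i)-\beta_0$. Modulo $1+i$ we have $2-i\equiv 1$ and $\beta_0\equiv\zt_5+\zt_5^2+\zt_5^3+\zt_5^4=-1\equiv1$. So the inert prime of $E$ above $1+i$ divides $w$, and $|\nrd(1+f)|^2=2000/625=16/5$. Your other suggestion, $x=1+f\zt_5$, does attain $1/5$, but that has to be checked, not inferred from the valuation count.
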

\begin{proof}
From $\beta_0^4=-(2-i)^3(2+i)$ and $\beta_0\bar\beta_0=5$, the prime $\mathfrak{P}'$ of $L$ above $2-i$, with $e=4$, satisfies $v_{\mathfrak P'}(\beta_0)=3$, and hence $v_{\mathfrak P'}(b)=3-4=-1$; at the prime above $2+i$, $v(b)=1$. Proposition~\ref{prop:nvd} with $n=8$ and $e=4$ gives $\nrd(x)\in(2-i)^{-1}\Z[i]$, and therefore $|\nrd(x)|^2\ge1/5$. The value $1/5$ is attained by the two-symbol element $x=-\zt_8+f\,(i\zt_5^3)$. All dispersion matrices have entries of modulus $1$, which gives $\|\bG\|_F^2=8\cdot 64$, $c=64/512=1/8$, and the second claim follows from \eqref{eq:deltaE}.
\end{proof}

\emph{Comparison with the perfect code.} The $8\times8$ perfect code \cite{Elia2007} uses $L=\Q(\zt_{32})$, $\gamma=(2+i)/(1+2i)$ and the unitary generator $\tfrac1{\sqrt8}[\zt_{32}^{\,j5^k}]$. Its guaranteed bound is $\delta_E\ge1/(5^7\cdot2^{24})$ in the normalization of Theorem~\ref{thm:C8det}, which is $5^6=15625$ times smaller than the value for $\mathcal{C}_8$. To obtain a conservative comparison we also searched the infinite perfect code; the smallest value found is $\delta_E\approx3.43\times10^{-10}=0.00576/2^{24}$, which is an upper bound on its true minimum. Hence
\begin{equation}
\frac{\delta_E(\mathcal{C}_8)}{\delta_E(\text{perfect }8\times8)}\ \ge\ \frac{1/5}{0.00576}\ =\ 34.7,
\end{equation}
that is, at least $10\log_{10}(34.7)/8=1.93$~dB more asymptotic coding gain. We note that the original $8\times8$ example of \cite{Barzegar} (cyclic, $L=\Q(\zt_{32})$, transcendental $S$) is the perfect code with $\gamma$ replaced by $S$.

\emph{Shaping and the unitary variant $\mathcal{C}_8^{\mathrm u}$.} The basis \eqref{eq:C8layers} is not orthogonal: the Gram matrix of $\{\zt_5^t\}$ under $\mathrm{Tr}_{L/K}(x\bar y)$ is $5\mathbf{I}-\mathbf{J}$, and $\bG$ has condition number $\sqrt5$. The corresponding real lattice is of $A_4$ type and denser than $\Z^8$; this is a relaxed-shaping trade-off in the spirit of \cite{SrinathRajanImproved}, at the cost of information losslessness. The loss is quantified in Fig.~\ref{fig:mi}: about $0.9$ bit per channel use at $10$~dB and $1.6$ bits at $20$~dB, i.e., $3.6\%$ of the $44$-bit capacity. For applications that require information losslessness, the orthogonal normal basis generated by $w=1+(1+i)\zt_5^3$, namely $\{\sigma^t(w)\zt_8^l\}$, gives a unitary variant $\mathcal{C}_8^{\mathrm u}$ (Proposition~\ref{prop:unitary}) with the same codeword form \eqref{eq:C8cw} and $\delta_E\approx1/(5.12\times10^{9})$ (numerical), $0.3$~dB below the perfect code.

\subsection{An infinite family generalizing $\mathcal{C}_8$}
The construction of $\mathcal{C}_8$ uses only three properties of the prime $5$: $5\equiv1\pmod4$, which makes $5$ split in $\Z[i]$; $5\equiv5\pmod8$, which makes $\zt_8$ inert above $5$; and the existence of a cyclic quartic subfield of $\Q(\zt_5)$.

\begin{theorem}\label{thm:family}
Let $p\equiv5\pmod 8$ be prime, $p=\pi\bar\pi$ in $\Z[i]$, $M\subset\Q(\zt_p)$ the cyclic quartic subfield, $K=\Q(i)$ and $L=K M(\zt_8)$. Let $g$ be a primitive root modulo $p$, $\sigma:\zt_p\mapsto\zt_p^g$ (of order $4$ on $KM$), $\tau:\zt_8\mapsto-\zt_8$, and let $\chi$ be the quartic character with $\chi(g)=i$ and $G(\chi)=\sum_{x}\chi(x)\zt_p^x$ its Gauss sum. Put
\[
u=\zt_8,\qquad a=\zt_8^{-1},\qquad b=G(\chi)/\pi_0,
\]
where $\pi_0\in\{\pi,\bar\pi\}$ is the prime with $\pi_0^3\mid G(\chi)^4$. Then:
\begin{enumerate}
\item \eqref{eq:cocycle} holds, $|a|=|b|=|u|=1$, and the crossed product $\cA_8^{(p)}$ is a division algebra of degree $8$;
\item with the integral basis (Gaussian periods of $M$)$\otimes\{1,\zt_8\}$ in every layer, $|\nrd(x)|^2\ge1/p$ for all nonzero $x$; the code is NVD, and its codewords have the form \eqref{eq:C8cw} with $b$ as above.
\end{enumerate}
For $p=5$ this is $\mathcal{C}_8$.
\end{theorem}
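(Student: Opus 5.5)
The proof will mirror Theorems~\ref{thm:C8div} and \ref{thm:C8det}, with the prime $5$ replaced by $p$. The argument has four steps.

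\emph{Step 1: the field and the Galois group.} Since $p\equiv1\pmod4$, $p$ splits in $\Z[i]$ and $M\not\ni i$, so $KM/K$ is cyclic of degree $4$ generated by $\sigma$. Since $p\equiv5\pmod 8$, we have $\sqrt p\in M$ while $\sqrt2,\sqrt{-2}\notin KM$. Hence $K(\zt_8)=K(\sqrt2)$ is linearly disjoint from $KM$, and $G\cong\Z/4\times\Z/2$ with $\sigma$ fixing $\zt_8$ and $\tau$ fixing $KM$.

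\emph{Step 2: the cocycle conditions \eqref{eq:cocycle}.} The first condition is immediate, as in \eqref{eq:C8par}: $N_\sigma(u)=\zt_8^4=-1=a/\tau(a)$. For the second we need $\sigma(b)/b=u\tau(u)=-i$. Since $\sigma$ fixes $\pi_0\in K$, it suffices that $\sigma(G(\chi))=-iG(\chi)$. This follows from the substitution $x\mapsto g^{-1}x$:
\[
\sigma(G(\chi))=\sum_x\chi(x)\zt_p^{gx}=\chi(g)^{-1}G(\chi)=-iG(\chi).
\]
We also check that $G(\chi)$ lies in $KM=L^\tau$, being fixed by the subgroup of $\Gal(\Q(i,\zt_p)/\Q(i))$ of index $4$. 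For the moduli, $|a|=|u|=1$ is clear. Next, $|G(\chi)|^2=p=|\pi_0|^2$ holds under every embedding, since conjugates of $G(\chi)$ are Gauss sums of $\chi^{\pm1}$, or equivalently since $G(\chi)\overline{G(\chi)}=\chi(-1)p=p$ because $-1$ is a fourth power modulo $p\equiv 5\pmod8$. Thus $|b|=1$ everywhere.

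\emph{Step 3: the division property.} At $v=(\pi)$ (either prime would do; we choose $\pi$), the extension $KM$ is totally tamely ramified of degree $4$. Also $\zt_8$ is inert above $p$, since $p$ has order $2$ in $(\Z/8)^\times$. Hence $G_v=G$. We take $H=\langle\sigma\rangle$ and $F=L^\sigma=\Q(\zt_8)$, so that the centralizer is $(L/F,\sigma,\zt_8^{-1})$ with residue field $\mathbb{F}_{p^2}$. By Lemma~\ref{lem:tame} we need the order of $\bar\zt_8^{-(p^2-1)/4}$. Since $p^2\equiv 25\equiv 9\pmod{16}$, we get $(p^2-1)/4\equiv 2\pmod 4$, so this element has order $4$. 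Theorem~\ref{thm:centralizer} then gives division. As before, $b$ plays no role here.

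\emph{Step 4: the NVD bound.} The Gaussian periods form an integral basis of $\OO_M$, so the layers lie in $\OO_L$, and $a,u$ are units. The main obstacle is the valuation of $b$. By Stickelberger's theorem (or directly from $G(\chi)^4=\pm \pi_0^3\bar\pi_0$, generalizing $\beta_0^4$), at the prime $\mathfrak P'$ of $L$ above $\pi_0$, with $e=4$, we have $v(G(\chi))=3$ and hence $v(b)=-1$. At the prime above $\bar\pi_0$, $v(b)=1$, and $b$ is integral elsewhere because $G(\chi)$ is an algebraic integer of norm a power of $p$. If the Stickelberger exponents are awkward to verify directly, I would first compute $G(\chi)^4$ via Jacobi sums, $G(\chi)^2=J(\chi,\chi)G(\chi^2)$ with $G(\chi^2)^2=p$, and use that $J(\chi,\chi)$ has norm $p$. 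Proposition~\ref{prop:nvd} with $n=8$ and $e=4$ then gives $\nrd(x)\in\pi_0^{-1}\Z[i]$, so $|\nrd(x)|^2\ge1/p$. Finally, Theorem~\ref{thm:block} yields the form \eqref{eq:C8cw}, and for $p=5$, $g=2$ the Gauss sum is $\beta_0$, recovering $\mathcal{C}_8$.
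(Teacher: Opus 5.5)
Your proposal is correct and follows the paper's proof essentially step for step. The shared steps are the substitution $x\mapsto g^{-1}x$ giving $\sigma(G(\chi))=-iG(\chi)$, the full decomposition group at $(\pi)$ combined with Lemma~\ref{lem:tame} and Theorem~\ref{thm:centralizer} for $H=\langle\sigma\rangle$ via $v_2(p^2-1)=3$, and the Jacobi-sum factorization $G(\chi)^4=pJ(\chi,\chi)^2$ feeding Proposition~\ref{prop:nvd}. You additionally check $G\cong\Z/4\times\Z/2$ and $b\in L^\tau$, which the paper leaves implicit.

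One slip should be fixed. Your claim that $-1$ is a fourth power modulo $p\equiv5\pmod 8$ is false, since $\chi(-1)=i^{(p-1)/2}=-1$. The correct identities are $G(\chi)\overline{G(\chi)}=p$ and $G(\chi)G(\bar\chi)=\chi(-1)p$, so drop your ``equivalently'' justification of $|G(\chi)|^2=p$; your first argument, that every conjugate is a root-of-unity multiple of $G(\chi^{\pm1})$, already suffices.
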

\begin{proof}
(1) $\sigma(G(\chi))=\sum_x\chi(x)\zt_p^{gx}=\overline{\chi(g)}\,G(\chi)=-iG(\chi)$, and therefore $\sigma(b)/b=-i=u\tau(u)$; moreover $N_\sigma(u)=\zt_8^4=-1=a/\tau(a)$ and $|G(\chi)|^2=p=|\pi_0|^2$. At $v=(\pi)$, the extension $KM/K$ is totally and tamely ramified of degree $4$, and $\zt_8$ generates the residue extension of degree $2$, because $p\equiv5\pmod8$ implies that the Frobenius of $p$ acts as $\zt_8\mapsto\zt_8^5=-\zt_8$. Hence $G_v=G$. The residue field of $F=\Q(\zt_8)$ at $w\mid v$ is $\mathbb{F}_{p^2}$. Since $p\equiv5\pmod 8$, $v_2(p^2-1)=3$, and $\bar a^{(p^2-1)/4}=\bar\zt_8^{-(p^2-1)/4}$ has order $8/\gcd(8,(p^2-1)/4)=4$. Lemma~\ref{lem:tame} and Theorem~\ref{thm:centralizer} with $H=\langle\sigma\rangle$ give the division property.
(2) Classically, $G(\chi)^2=J(\chi,\chi)G(\chi^2)$ and $G(\chi^2)^2=p$ \cite[Ch.~8]{IrelandRosen}; therefore $G(\chi)^4=pJ(\chi,\chi)^2$ with $J(\chi,\chi)\in\Z[i]$ and $|J|^2=p$. Thus $J$ is an associate of $\pi$ or $\bar\pi$, and $G(\chi)^4$ is an associate of $\pi_0^3\bar\pi_0$, where $\pi_0$ is the prime dividing $J$. The primes of $L$ above $\pi_0$ and $\bar\pi_0$ have ramification index $4$; hence $v_{\mathfrak{P}_0}(G(\chi))=3$, $v_{\mathfrak{P}_0}(b)=-1$, and $v(b)=+1$ above $\bar\pi_0$. Because $\mathrm{disc}(K)$ and $\mathrm{disc}(M)$ are coprime, $\OO_{KM}=\Z[i]\otimes\OO_M$, and the Gaussian periods form a $\Z$-basis of $\OO_M$. Proposition~\ref{prop:nvd} with $n=8$ and $e=4$ gives $\nrd(x)\in\pi_0^{-1}\Z[i]$.
\end{proof}

Table~\ref{tab:family} shows that the energy-normalized minimum determinant decreases rapidly with $p$, because the Gaussian periods grow in norm and the layer lattice becomes less orthogonal (condition number $\sqrt p$). The prime $p=5$ is therefore the optimal member of the family, which justifies the choice of $\mathcal{C}_8$.

\begin{table}[t]
\centering
\caption{The family of Theorem~\ref{thm:family} (raw $=\min|\nrd|^2$ by search; bound $=1/p$, proved)}\label{tab:family}
\begin{tabular}{@{}cccccc@{}}
\toprule
$p$ & $\pi_0$ & bound & raw & cond$(\bG)$ & $\delta_E$\\
\midrule
5 & $2-i$ & $1/5$ & $1/5$ & $2.24$ & $1/8.39\cdot10^{7}$\\
13 & $3-2i$ & $1/13$ & $1$ & $3.61$ & $1/2.56\cdot10^{10}$\\
29 & $5+2i$ & $1/29$ & $1$ & $5.39$ & $1/1.41\cdot10^{13}$\\
\bottomrule
\end{tabular}
\end{table}

\subsection{Rate-$2$ puncturing for $8\times2$ systems}
With $N_r=2$ receive antennas, full-rate $8\times8$ codes cannot be decoded efficiently, and rate-$N_r$ codes are used instead \cite{VHO2012,SrinathRajan2014}. Keeping the two layers $\{1,e\}$ of \eqref{eq:C8cw}, i.e., setting $\mathbf{x}^{(1)}=0$ and $x_{20}=x_{30}=0$, gives $16$ QAM symbols per $8$ channel uses. The codewords become
\begin{equation}\label{eq:C8x2cw}
\bX=\frac{1}{2\sqrt2}\begin{bmatrix}
C_{\zt_8^{-1}}(\mathbf{y}) & \mathbf{0}\\
\mathbf{0} & C_{-\zt_8^{-1}}\big(\tau(\mathbf{y})\big)
\end{bmatrix},
\end{equation}
where $\mathbf{y}=(x_{00},x_{10},0,0)$. The codeword is a block-diagonal matrix of two $4\times4$ cyclic codewords linked by $\tau$.

\begin{proposition}\label{prop:punct}
Let $\mathcal{C}_8^{\times2}$ denote the code \eqref{eq:C8x2cw} with layers \eqref{eq:C8layers}. Then $\min_{x\ne0}|\det M_x|^2=1$ exactly, and $\delta_E(\mathcal{C}_8^{\times2})=1/2^{24}$ under the normalization $\mathbb{E}\|\bX\|_F^2=16E_s$. The subcode of the $8\times8$ perfect code on the layers $\{1,e\}$ has $\delta_E\le1/(5\cdot2^{24})$ (search); the ratio is at least $5$, i.e., $10\log_{10}5/8=0.87$~dB more asymptotic coding gain.
\end{proposition}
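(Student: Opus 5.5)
The plan is to compute the determinant of the punctured codeword exactly and reduce everything to a norm computation in a cyclic subalgebra. By \eqref{eq:C8x2cw}, $M_x$ is block diagonal, so
\[
\det M_x=\det C_{\zt_8^{-1}}(\mathbf y)\cdot\det C_{-\zt_8^{-1}}(\tau(\mathbf y)).
\]
The first factor is the reduced norm of $y=x_{00}+e\,x_{10}$ in the cyclic algebra $\mathcal B=(L/F,\sigma,\zt_8^{-1})$, $F=\Q(\zt_8)$, which is exactly the centralizer used in the proof of Theorem~\ref{thm:C8div}. The second factor is $\tau$ applied to the first, since $\tau(\zt_8^{-1})=-\zt_8^{-1}$ and $\tau$ commutes with $\sigma$. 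Hence
\[
\det M_x=N_{F/K}(\nrd_{\mathcal B}(y)),\qquad |\det M_x|^2=|N_{F/K}(\nrd_{\mathcal B}(y))|^2 .
\]
In particular, $b$ and $u$ no longer enter.

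For the lower bound, note that $\mathcal B$ is a division algebra: Theorem~\ref{thm:C8div} showed that its invariant at $w$ has order $4$. So $\nrd_{\mathcal B}(y)\neq0$ for $y\ne0$. The layers \eqref{eq:C8layers} lie in $\OO_L$ and the constant $\zt_8^{-1}$ is a unit, so every entry of $C_{\zt_8^{-1}}(\mathbf y)$ is integral. Therefore $\nrd_{\mathcal B}(y)\in\OO_F=\Z[\zt_8]$ is nonzero, and its norm to $K$ is a nonzero element of $\Z[i]$. This gives $|\det M_x|^2\ge1$.

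For attainment, take $y=x_{00}=1$ (a single symbol). Then $M_x=\mathbf I$ and $|\det M_x|^2=1$, so the minimum is exactly $1$.

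For the normalization, the punctured code has $16$ symbols, and each dispersion matrix has $8$ unit-modulus entries. Thus $\|\bG\|_F^2=8\cdot16$, and under $\mathbb E\|\bX\|_F^2=16E_s$ the scaling constant is $c=16/128=1/8$. This gives $\delta_E=c^8\cdot1=2^{-24}$, matching the factor $1/(2\sqrt2)$ in \eqref{eq:C8x2cw}. I will check this constant carefully, since the stated normalization $16E_s$ differs from the $n^2E_s$ convention of \eqref{eq:deltaE}.

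The perfect-code statement is only an upper bound, so it requires a witness rather than a proof. I would exhibit, from the search, an explicit nonzero element $x$ of the $\{1,e\}$ subcode of the $8\times8$ perfect code with $|\nrd(x)|^2$ at most $1/5$ after its unitary normalization ($c=1$). Here $\gamma=(2+i)/(1+2i)$ has denominator $1+2i$, and a single $e$-layer contributes one factor of $\gamma$ per Leibniz term along the wrap-around. So the natural candidate is something like $x=x_0+e\,x_1$ with $x_1$ chosen so that the norm form $N(x_0)-\gamma N(x_1)$-type expression picks up exactly one factor $(1+2i)^{-1}$. Its determinant, $1/5$ up to the $2^{-24}$ scaling, can then be verified by direct evaluation.

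The ratio $\ge5$ and the $0.87$~dB figure follow from \eqref{eq:deltaE}. The main obstacle is producing and certifying this perfect-code witness exactly, for instance by computing the reduced norm in $\Q(\zeta_{32})$ symbolically; the bound for $\mathcal C_8^{\times2}$ itself is routine once the block-diagonal factorization is in place.
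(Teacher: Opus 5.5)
Your proof is correct and follows the paper's argument. Since $b$ is absent and the remaining constants $\pm\zt_8^{-1}$ are roots of unity, $M_x$ has entries in $\OO_L$, so $\nrd(x)$ is a nonzero element of $\Z[i]$ with equality at $x=1$; and $c=1/8$ comes from the unit-modulus dispersion entries. Your factorization $\det M_x=N_{F/K}(\nrd_{\mathcal{B}}(y))$ is only a harmless refinement of that integrality argument.

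For the perfect-code bound, which the paper only quotes from a search, your witness plan works, provided the numerator $(1+2i)N(x_0)-(2+i)N(x_1)$ is a unit. Using $N_{L/K}(1-\zt_{32})=1-i$ and $N_{L/K}(\zt_{32})=-i$, the element $x=1+e(\zt_{32}^3-\zt_{32}^4)$ gives $\det M_x=1-\gamma(1+i)=-i/(1+2i)$, hence $|\det M_x|^2=1/5$. This is in fact the exact minimum on the layers $\{1,e\}$.
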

\begin{proof}
In \eqref{eq:C8x2cw} the constant $b$ does not appear, and the remaining constants $\pm\zt_8^{-1}$ are roots of unity. Consequently $M_x$ has entries in $\OO_L$ and $\nrd(x)\in\Z[i]\setminus\{0\}$, which gives $|\nrd(x)|^2\ge1$; equality holds for $x=1$. The normalization follows as in Theorem~\ref{thm:C8det}, since all entries of the dispersion matrices have modulus one.
\end{proof}
The same argument applies to every subcode supported on the even layers $1,e,e^2,e^3$ (that is, with $\mathbf{x}^{(1)}=0$), since $b$ multiplies only the odd layers in \eqref{eq:block}. This makes $\mathcal{C}_8$ a natural mother code for rate-$1$ to rate-$4$ asymmetric codes with a common decoder structure. Being a subset of a division algebra, every such subcode is fully diverse.

\subsection{DMT optimality}
\begin{corollary}\label{cor:dmt}
The codes $\mathcal{C}_4$, $\mathcal{C}_8$, $\mathcal{C}_8^{\mathrm u}$ and every member of the family of Theorem~\ref{thm:family} achieve the optimal diversity--multiplexing tradeoff of the $n\times N_r$ channel for every $N_r\ge1$.
\end{corollary}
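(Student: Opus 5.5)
The plan is to invoke the standard sufficient condition of Elia \emph{et al.}\ \cite{Elia2006}: a full-rate ($n^2$ QAM symbols per $n$ channel uses) linear STBC that has the NVD property achieves the optimal DMT of the $n\times N_r$ channel for every $N_r\ge1$. The corollary then reduces to checking, for each listed code, three things: (i) it is full-rate, (ii) it is carved from a division algebra, so it is fully diverse, and (iii) $\delta_{\min}(\mathcal{C}_\infty)>0$, with the bound independent of the constellation size. Because these properties are stated for the infinite code $\mathcal{C}_\infty$ over $\Z[i]$, they hold uniformly for all QAM sizes. That uniformity is exactly what the DMT argument needs as the rate grows with SNR.

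Full rate follows from Theorem~\ref{thm:block}: each codeword carries $(2m)^2=n^2$ symbols from $\Z[i]$, namely $16$ for $\mathcal{C}_4$ and $64$ for the $8\times8$ codes. The division property is given by Theorem~\ref{thm:C4div} for $\mathcal{C}_4$, by Theorem~\ref{thm:C8div} for $\mathcal{C}_8$ and $\mathcal{C}_8^{\mathrm u}$ (the two codes share the same algebra $\cA_8$), and by Theorem~\ref{thm:family}(1) for the family. For NVD I would cite Theorem~\ref{thm:C4det} ($\delta_{\min}=1/2000$), Theorem~\ref{thm:C8det} ($1/5$) and Theorem~\ref{thm:family}(2) ($\ge1/p$).

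The only case not already covered by a proved bound is $\mathcal{C}_8^{\mathrm u}$, whose $\delta_E$ is reported only numerically. Here I would argue directly from Proposition~\ref{prop:nvd}. The layers of $\mathcal{C}_8^{\mathrm u}$ use the basis $\{\sigma^t(w)\zeta_8^l\}$ with $w=1+(1+i)\zeta_5^3\in\OO_L$, so $\Lambda\subseteq\OO_L$. The constants $a,u,b$ are unchanged, and the valuation analysis of $b$ from the proof of Theorem~\ref{thm:C8det} still applies. Hence $\nrd(x)\in(2-i)^{-1}\Z[i]$ and $|\nrd(x)|^2\ge1/5$, up to the fixed normalization constant. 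This rigorous bound may be weaker than the numerical value, but it gives NVD, and NVD is all that is needed.

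The main obstacle is conceptual rather than computational: one must confirm that the NVD criterion of \cite{Elia2006} requires only a nonzero lower bound on $|\det|$ over the lattice, and not cubic shaping or unitarity. It must therefore apply to the non-unitary $\mathcal{C}_8$, whose generator has condition number $\sqrt5$. I would handle this by noting that a fixed invertible $\bG$ changes the energy and the minimum-distance scaling only by constants independent of $\rho$. Such constants do not affect the SNR exponents, so the DMT optimality carries over for every $N_r$.
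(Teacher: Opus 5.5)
Your proposal is correct and follows essentially the same route as the paper. Full rate comes from the codeword form, and NVD comes from Theorems~\ref{thm:C4det}, \ref{thm:C8det} and \ref{thm:family}, plus Proposition~\ref{prop:nvd} for $\mathcal{C}_8^{\mathrm u}$ because its layer lattice lies in $\OO_L$; the criterion of \cite{Elia2006} then gives DMT optimality. Your separate division-property check is redundant, since NVD already gives full rank. Your remark that a fixed invertible, non-unitary generator only changes SNR-independent constants is a harmless elaboration that the paper leaves implicit.
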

\begin{proof}
Each of these codes is a full-rate $n\times n$ code ($n^2$ QAM symbols per $n$ channel uses) with the NVD property (Theorems~\ref{thm:C4det}, \ref{thm:C8det} and \ref{thm:family}, and Proposition~\ref{prop:nvd} for $\mathcal{C}_8^{\mathrm u}$). By \cite{Elia2006}, full-rate NVD codes achieve the DMT \cite{ZhengTse} for every number of receive antennas. The original transcendental codes of \cite{Barzegar,Shashidhar2006} do not satisfy this hypothesis.
\end{proof}

\section{Decoding and Determinant-Spectrum Analysis}\label{sec:spectrum}
This section describes the ML decoder used for all codes and analyzes the codeword differences that dominate the error probability; the analysis predicts the simulation results of Section~\ref{sec:sim}.

\subsection{Maximum-likelihood decoding}\label{sec:dec}
Stacking columns, $\mathrm{vec}(\bY)=\sqrt{\rho/n}(\mathbf{I}_n\otimes\bH)\bG\bs+\mathrm{vec}(\bW)$. The equivalent $2nN_r\times 2n^2$ real model $\mathbf{y}=\mathbf{B}\mathbf{z}+\mathbf{w}$, with $\mathbf{z}=[\Re\bs^T,\Im\bs^T]^T$, is decoded by the Schnorr--Euchner sphere decoder \cite{Agrell2002} with sorted-QR ordering \cite{Wubben2001}, which is exact ML. For $n=8$ with $N_r=8$ (128 real dimensions), exact ML decoding is impractical at low and moderate SNR, and the K-best decoder with $K=1024$ is applied to the same model for all codes; Section~\ref{sec:sim} shows that the ranking of the codes does not depend on $K$. Complexity is reported as the average number of visited nodes.

\begin{remark}[Vectorization]
Any other stacking (rows instead of columns, or interleaved real and imaginary parts) is a fixed permutation of the coordinates of $\mathbf{y}$ and $\mathbf{z}$, i.e., an orthogonal change of variables; the ML decision and the error performance are therefore unchanged. Sphere-decoding complexity depends on the ordering through the zero pattern of the $\mathbf R$-factor. With sorted QR the decoder re-orders the columns for every channel realization, which makes the visited-node counts essentially insensitive to the initial vectorization. Orderings that expose Hurwitz--Radon orthogonality are those that enable fast decodability \cite{Biglieri2009,JithamithraRajan}.
\end{remark}

\subsection{Determinant spectrum and union bound}
The minimum determinant governs the asymptote of slope $nN_r$ of the error probability. At practical error rates, however, the union bound is dominated by the \emph{number} of codeword differences with small eigenvalue products. For a linear code with $4$-QAM symbols, let $\Delta=\bX(\boldsymbol\delta)$ be the difference matrix for a vector $\boldsymbol\delta$ with entries in $\{0,\pm2,\pm2i,\pm2\pm2i\}$, and let $w(\boldsymbol\delta)=\prod_k 2^{-\#\{\text{nonzero real parts of }\delta_k\}}$ be the probability that a uniformly chosen codeword admits $\boldsymbol\delta$ as a difference. The Chernoff bound for Rayleigh fading gives
\begin{equation}\label{eq:ub}
P_{\mathrm{blk}}\le\sum_{\boldsymbol\delta\ne0}w(\boldsymbol\delta)\prod_{i=1}^{n}\Big(1+\frac{\rho\,\lambda_i(\Delta\Delta^H)}{4n}\Big)^{-N_r}.
\end{equation}
We evaluated \eqref{eq:ub} exactly over all differences of Hamming weight at most $3$ for the $4\times4$ and $8\times2$ codes ($2.9\times10^5$ terms) and at most $2$ for the $8\times8$ codes ($1.3\times10^5$ terms). Table~\ref{tab:spectrum} reports, in the normalization of Section~\ref{sec:sim}, the smallest determinant $d_1=\min|\det\Delta|^2$ and the weighted multiplicities $A_1$ (differences attaining $d_1$) and $A_4$ (differences with $|\det\Delta|^2\le4d_1$).

\begin{table}[t]
\centering
\caption{Low-weight determinant spectrum, 4-QAM, normalized codewords}\label{tab:spectrum}
\begin{tabular}{@{}lccc@{}}
\toprule
Code & $d_1$ & $A_1$ & $A_4$\\
\midrule
Perfect $4\times4$ & $1.42\cdot10^{-2}$ & $492$ & $1580$\\
$\mathcal{C}_4$ & $8.00\cdot10^{-3}$ & $304$ & $496$\\
Original $4\times4$ & $5.28\cdot10^{-8}$ & $16$ & $16$\\
\midrule
Perfect $8\times8$ & $8.79\cdot10^{-8}$ & $192$ & $384$\\
$\mathcal{C}_8$ & $3.05\cdot10^{-6}$ & $576$ & $1600$\\
$\mathcal{C}_8^{\mathrm u}$ & $1.00\cdot10^{-7}$ & $1024$ & $1792$\\
Original $8\times8$ & $4.63\cdot10^{-9}$ & $64$ & $64$\\
\midrule
Punct.\ perfect $8\times2$ & $0.200$ & $512$ & $640$\\
$\mathcal{C}_8^{\times2}$ & $1.000$ & $320$ & $2496$\\
\bottomrule
\end{tabular}
\end{table}

\begin{figure}[t]
\centering
\includegraphics[width=\columnwidth]{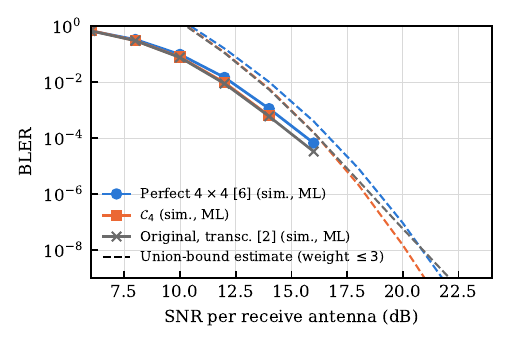}
\caption{Union-bound estimate \eqref{eq:ub} over differences of Hamming weight $\le3$ (dashed) and simulated exact-ML BLER (solid), $4\times4$, $4$-QAM, $N_r=4$. The estimate reproduces the ordering of the codes and predicts that the transcendental code falls behind at high SNR.}\label{fig:ub4}
\end{figure}

Table~\ref{tab:spectrum} and Fig.~\ref{fig:ub4} lead to three predictions, all confirmed in Section~\ref{sec:sim}.
\begin{itemize}
\item $\mathcal{C}_4$ has a $1.8$ times smaller $d_1$ than the perfect code but $3.2$ times fewer near-minimal differences ($A_4=496$ against $1580$). It should therefore outperform the perfect code at moderate error rates despite its $0.62$~dB asymptotic deficit.
\item The original transcendental codes have extremely small $d_1$ but very small multiplicity ($A_4=16$). They should be indistinguishable from the algebraic codes at moderate SNR and fall behind only at high SNR.
\item $\mathcal{C}_8$ and $\mathcal{C}_8^{\times2}$ have larger $d_1$, by exactly the proved factors $34.7$ and $5$, but larger $A_4$. Under \eqref{eq:ub}, the curves of $\mathcal{C}_8$ and the perfect code cross only near $26$~dB, at a BLER of about $10^{-37}$ ($N_r=8$), and the $8\times2$ curves cross near $36$~dB, at about $10^{-34}$. In the practical range the multiplicity term dominates, and the larger minimum determinant acts as an asymptotic guarantee rather than a practical gain.
\end{itemize}
The design lesson is that for $n\ge8$, the relevant figure of merit at practical error rates is the low-determinant multiplicity $A_4$ rather than $d_1$. Optimizing $A_4$ within the algebraic families constructed here, for instance through the choice of ideal, layer basis and $b$, is a well-posed problem that we leave open.

\section{Numerical Results}\label{sec:sim}
All codes are implemented from their published generator matrices. As a check, our determinant routine recovers exactly the published minimum determinants of the Golden code ($1/5$) and of the $4\times4$ perfect code ($1/1125$), and the explicit forms \eqref{eq:C4cw} and \eqref{eq:C8cw} coincide with the dispersion-matrix implementation to machine precision. The sphere decoder was verified against brute-force ML, and the GPU-batched K-best decoder against a sequential implementation. Exact-ML points use up to $3\times10^6$ codewords ($4\times4$ and $8\times2$) and K-best points up to $5\times10^5$ codewords; each point stops after $200$--$300$ block errors, and a point with no error in $3\times10^6$ codewords ends the curve. Three channel models are used: i.i.d.\ Rayleigh fading; Kronecker-correlated fading with exponential correlation $0.7$ at both ends; and imperfect channel knowledge, where the receiver uses $\hat\bH=\bH+\mathbf{E}$ with i.i.d.\ $\mathcal{CN}(0,1/\rho)$ estimation errors.

\subsection{Minimum determinants and information loss}
Table~\ref{tab:delta} collects the energy-normalized minimum determinants \eqref{eq:deltaE}. Entries marked "proved" are exact; entries marked "search" are the smallest values found by random search over sparse integer symbol vectors and are therefore upper bounds on the true minimum. The last column gives the minimum of $|\det(\bX-\bX')|^2$ over pairs of $4$-QAM codewords, the quantity that governs the pairwise error probability of the transmitted (finite) code.

\begin{table}[t]
\centering
\caption{Energy-normalized minimum determinants}\label{tab:delta}
\setlength{\tabcolsep}{3pt}
\begin{tabular}{@{}llccc@{}}
\toprule
$n$ & Code & $\delta_E$ (infinite code) & IL & 4-QAM \\
\midrule
\multirow{3}{*}{4} & Perfect \cite{Oggier2006} & $1/1125$ (proved) & yes & $0.228$\\
 & $\mathcal{C}_4$ (new) & $1/2000$ (proved) & yes & $0.128$\\
 & Original \cite{Barzegar} & no positive bound & yes & $8.4\cdot10^{-7}$\\
\midrule
\multirow{4}{*}{8} & Perfect \cite{Elia2007} & $\ge 1/1.3\cdot10^{12}$ (proved) & yes & --\\
 & & $\le 1/2.91\cdot10^{9}$ (search) & & \\
 & $\mathcal{C}_8$ (new) & $1/8.39\cdot10^{7}$ (proved) & no & --\\
 & $\mathcal{C}_8^{\mathrm u}$ (new) & $\ge 1/1.3\cdot10^{11}$ (proved) & yes & --\\
 & & $\le 1/5.12\cdot10^{9}$ (search) & & \\
 & Original $\Z/8$ \cite{Barzegar} & $\le 1/4.7\cdot10^{11}$ (search) & yes & --\\
\bottomrule
\end{tabular}
\vspace{2pt}

{\footnotesize IL: information-lossless. For $\mathcal{C}_8^{\mathrm u}$, the lower bound follows from Proposition~\ref{prop:nvd}, since its layer lattice lies in $\OO_L$.}
\end{table}

The table supports three observations. First, the earlier transcendental codes have determinants several orders of magnitude below those of every algebraic code, even within the $4$-QAM codebook. Second, among the $8\times8$ codes, the proved determinant of $\mathcal{C}_8$ exceeds the best value found for the perfect code by a factor of $34.7$. Third, the unitary variant $\mathcal{C}_8^{\mathrm u}$ lies between the two.

\begin{figure}[t]
\centering
\includegraphics[width=\columnwidth]{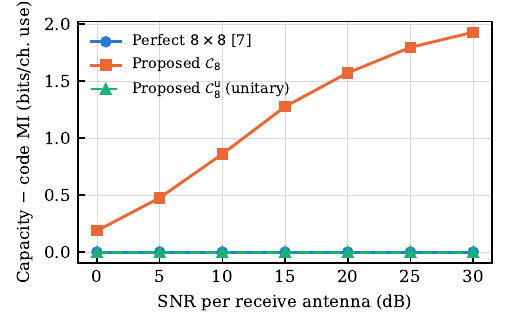}
\caption{Gap between the $8\times8$ MIMO capacity and the mutual information of the code with Gaussian inputs, i.i.d.\ Rayleigh fading, $N_r=8$ (300 channel realizations per point). The perfect code and $\mathcal{C}_8^{\mathrm u}$ are information-lossless (zero gap); $\mathcal{C}_8$ loses $0.86$ bit at $10$~dB and $1.57$ bits at $20$~dB.}\label{fig:mi}
\end{figure}

\subsection{Error performance of the $4\times4$ codes (exact ML)}
Fig.~\ref{fig:bler4} shows the BLER of the $4\times4$ codes with $4$-QAM (32 bits per codeword) under exact ML decoding, down to BLER $10^{-6}$; Table~\ref{tab:gaps4} lists the SNR differences.
\begin{itemize}
\item In i.i.d.\ fading, $\mathcal{C}_4$ outperforms the perfect code by $0.18$~dB at BLER $10^{-2}$ and by $0.45$--$0.48$~dB between $10^{-3}$ and $10^{-5}$, despite its smaller asymptotic $\delta_{\min}$. This is the behaviour predicted by its smaller multiplicity $A_4$ (Section~\ref{sec:spectrum}). The original transcendental code performs close to $\mathcal{C}_4$ in this range.
\item In correlated fading, $\mathcal{C}_4$ outperforms the perfect code by $0.38$--$0.41$~dB between $10^{-3}$ and $10^{-5}$. The transcendental code, in contrast, falls \emph{behind} the perfect code by $0.22$~dB at $10^{-3}$, $0.40$~dB at $10^{-4}$ and $0.43$~dB at $10^{-5}$; at $20$~dB its BLER is $3.6$ times that of $\mathcal{C}_4$. The growing gap is the simulated manifestation of its small determinants (Table~\ref{tab:spectrum} and Remark~\ref{rem:transc}).
\item With imperfect channel knowledge, $\mathcal{C}_4$ outperforms the perfect code by $0.24$~dB at $10^{-2}$ and by $0.31$--$0.43$~dB between $10^{-3}$ and $10^{-4}$ (Fig.~\ref{fig:bler4csi}).
\end{itemize}

\begin{table}[t]
\centering
\caption{SNR difference (dB) to the $4\times4$ perfect code at target BLER, exact ML, $4$-QAM (negative: better)}\label{tab:gaps4}
\begin{tabular}{@{}lcccc@{}}
\toprule
 & $10^{-2}$ & $10^{-3}$ & $10^{-4}$ & $10^{-5}$\\
\midrule
\multicolumn{5}{@{}l}{i.i.d.\ Rayleigh}\\
\quad $\mathcal{C}_4$ & $-0.18$ & $-0.46$ & $-0.48$ & $-0.45$\\
\quad Original & -- & $-0.37$ & $-0.44$ & --\\
\multicolumn{5}{@{}l}{Correlated ($\rho_t=\rho_r=0.7$)}\\
\quad $\mathcal{C}_4$ & -- & $-0.39$ & $-0.41$ & $-0.38$\\
\quad Original & -- & $+0.22$ & $+0.40$ & $+0.43$\\
\multicolumn{5}{@{}l}{Imperfect CSI ($\sigma_e^2=1/\rho$)}\\
\quad $\mathcal{C}_4$ & $-0.24$ & $-0.31$ & $-0.43$ & --\\
\quad Original & $-0.22$ & $-0.33$ & $-0.40$ & --\\
\bottomrule
\end{tabular}
\vspace{2pt}

{\footnotesize --: target not bracketed by the simulated points.}
\end{table}

\begin{figure}[t]
\centering
\includegraphics[width=\columnwidth]{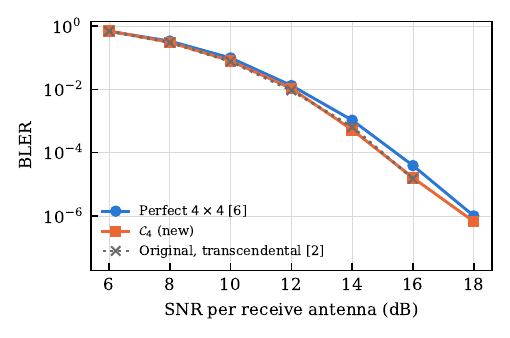}\\
\includegraphics[width=\columnwidth]{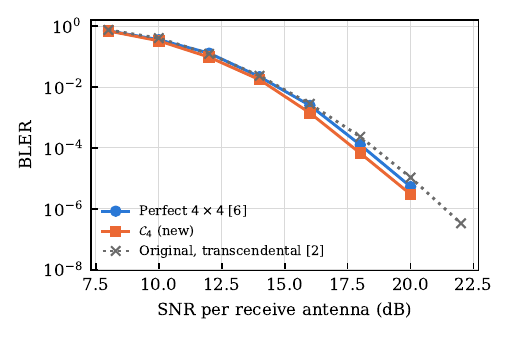}
\caption{BLER of $4\times4$ codes with $4$-QAM and exact ML (sphere) decoding, $N_r=4$. Top: i.i.d.\ Rayleigh. Bottom: Kronecker-correlated Rayleigh, exponential correlation $0.7$ at both ends.}\label{fig:bler4}
\end{figure}

\begin{figure}[t]
\centering
\includegraphics[width=\columnwidth]{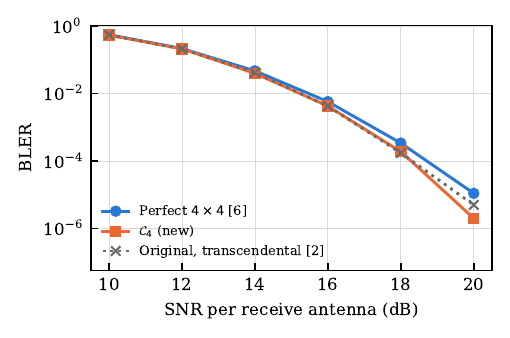}
\caption{BLER of $4\times4$ codes with $4$-QAM and exact ML decoding under imperfect channel knowledge ($\sigma_e^2=1/\rho$), $N_r=4$.}\label{fig:bler4csi}
\end{figure}

\emph{Larger constellations.} Fig.~\ref{fig:bler4hq} shows $16$-QAM and $64$-QAM results with K-best decoding ($K=2048$ and $K=4096$). The three codes lie within $0.3$~dB of one another down to BLER $10^{-2}$, and all curves flatten below about $10^{-3}$. The flattening is a decoder floor rather than a property of the codes: it occurs at the same error rate for all three codes and is absent from the exact-ML $4$-QAM curves. An attempt at node-capped exact ML decoding for $16$-QAM did not resolve this regime, because the node cap was reached in a fraction of codewords comparable to the BLER itself; those runs are therefore not reported. Within the simulable range, the vanishing determinant of the transcendental code does not become visible for these constellation sizes.

\begin{figure}[t]
\centering
\includegraphics[width=\columnwidth]{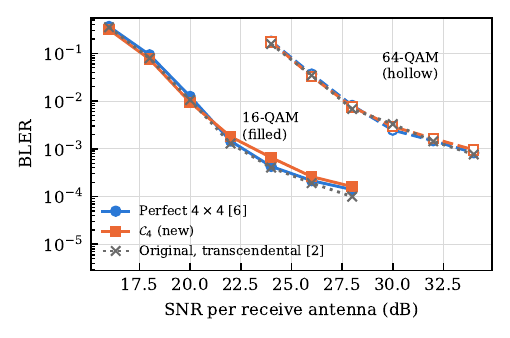}
\caption{BLER of $4\times4$ codes with $16$-QAM (filled markers, K-best with $K=2048$) and $64$-QAM (hollow markers, $K=4096$), i.i.d.\ Rayleigh, $N_r=4$.}\label{fig:bler4hq}
\end{figure}

\subsection{Decoding complexity}
We measured the average number of visited nodes of the sphere decoder (sorted QR, $4$-QAM). In i.i.d.\ fading the three codes are within a factor of $1.5$ of one another (at $12$~dB: $4.0\cdot10^3$ for the perfect code, $4.3\cdot10^3$ for $\mathcal{C}_4$ and $2.9\cdot10^3$ for the transcendental code). In correlated fading, $\mathcal{C}_4$ visits about $35\%$ more nodes than the perfect code ($5.0\cdot10^4$ against $3.7\cdot10^4$ at $12$~dB). In agreement with Section~\ref{sec:construction}, the algebraic structure does not reduce ML decoding complexity; its benefit is the simpler encoder of Proposition~\ref{prop:enccost}.

\subsection{Error performance of the $8\times8$ codes (near ML)}
Exact ML decoding of the 128-dimensional real model is out of reach in this SNR range; node-capped sphere decoding hit its cap in a fraction of codewords comparable to the BLER and is not reported. All $8\times8$ codes are therefore decoded by the same GPU-batched K-best decoder. Table~\ref{tab:ksweep} shows that increasing $K$ from $64$ to $4096$ lowers the BLER of every code by more than an order of magnitude, which confirms that the decoder still contributes to the error rate, but the \emph{ranking} $\mathcal{C}_8^{\mathrm u}\preceq$ perfect $\prec\mathcal{C}_8$ is the same for every $K$. The comparisons below use $K=1024$.

\begin{table}[t]
\centering
\caption{BLER of $8\times8$ codes versus the K-best list size $K$ (i.i.d.\ Rayleigh, $4$-QAM, $N_r=8$)}\label{tab:ksweep}
\setlength{\tabcolsep}{4pt}
\begin{tabular}{@{}lcccc@{}}
\toprule
 & $K=64$ & $256$ & $1024$ & $4096$\\
\midrule
\multicolumn{5}{@{}l}{$16$~dB}\\
\quad Perfect & $4.2\cdot10^{-2}$ & $1.7\cdot10^{-2}$ & $7.7\cdot10^{-3}$ & $1.9\cdot10^{-3}$\\
\quad $\mathcal{C}_8$ & $6.9\cdot10^{-2}$ & $3.0\cdot10^{-2}$ & $1.2\cdot10^{-2}$ & $3.7\cdot10^{-3}$\\
\quad $\mathcal{C}_8^{\mathrm u}$ & $3.3\cdot10^{-2}$ & $1.6\cdot10^{-2}$ & $5.3\cdot10^{-3}$ & $1.6\cdot10^{-3}$\\
\multicolumn{5}{@{}l}{$20$~dB}\\
\quad Perfect & $9.8\cdot10^{-3}$ & $3.9\cdot10^{-3}$ & $1.5\cdot10^{-3}$ & $2.6\cdot10^{-4}$\\
\quad $\mathcal{C}_8$ & $1.7\cdot10^{-2}$ & $6.8\cdot10^{-3}$ & $2.5\cdot10^{-3}$ & $3.6\cdot10^{-4}$\\
\quad $\mathcal{C}_8^{\mathrm u}$ & $8.6\cdot10^{-3}$ & $3.5\cdot10^{-3}$ & $1.1\cdot10^{-3}$ & $1.6\cdot10^{-4}$\\
\bottomrule
\end{tabular}
\end{table}

\emph{i.i.d.\ fading} (Fig.~\ref{fig:bler8}). The unitary code $\mathcal{C}_8^{\mathrm u}$ outperforms the perfect code by $0.53$~dB at BLER $10^{-2}$ and by $0.70$~dB at $10^{-3}$; the two curves meet near $5\cdot10^{-4}$ ($24$~dB: $2.7\cdot10^{-4}$ against $2.5\cdot10^{-4}$, with $54$ and $49$ errors). Both codes are information-lossless, and the gain therefore stems from the determinant spectrum of $\mathcal{C}_8^{\mathrm u}$ rather than from shaping. The code $\mathcal{C}_8$ is $1.06$~dB worse than the perfect code at $10^{-2}$ and $1.26$~dB worse at $10^{-3}$, with no crossing through $24$~dB. As predicted in Section~\ref{sec:spectrum}, its larger minimum determinant is an asymptotic guarantee that does not translate into a gain at practical error rates; the non-orthogonal generator (Fig.~\ref{fig:mi}) and the higher multiplicity of small determinants work against it. The original transcendental code tracks the perfect code down to $10^{-3}$ and falls $0.38$~dB behind at $5\cdot10^{-4}$.

\emph{Correlated fading and imperfect CSI} (Fig.~\ref{fig:bler8b}). With correlation $0.7$, $\mathcal{C}_8^{\mathrm u}$ outperforms the perfect code by $0.74$~dB at BLER $10^{-2}$, the transcendental code is within $0.15$~dB of the perfect code, and $\mathcal{C}_8$ is $1.4$~dB worse at $10^{-1}$. With imperfect channel knowledge, $\mathcal{C}_8^{\mathrm u}$ gains $0.34$~dB at $10^{-2}$ and $0.71$~dB at $10^{-3}$, and $\mathcal{C}_8$ is $1.2$~dB worse at $10^{-2}$.

\begin{figure}[t]
\centering
\includegraphics[width=\columnwidth]{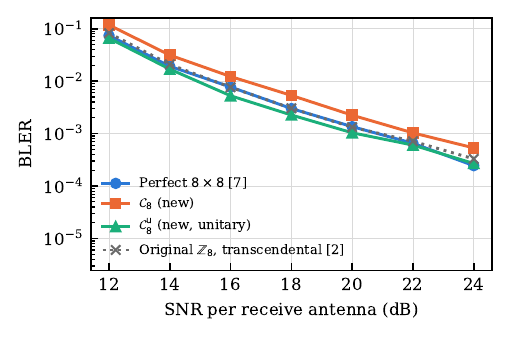}
\caption{BLER of $8\times8$ codes with $4$-QAM, i.i.d.\ Rayleigh, $N_r=8$, K-best decoding ($K=1024$) with sorted QR.}\label{fig:bler8}
\end{figure}

\begin{figure}[t]
\centering
\includegraphics[width=\columnwidth]{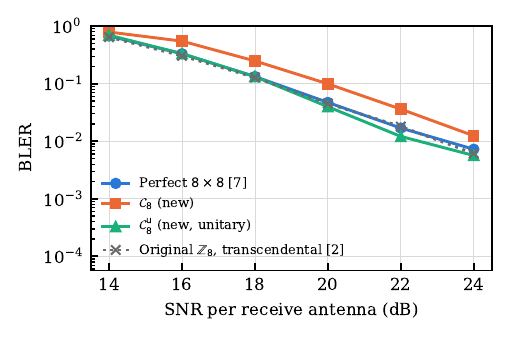}\\
\includegraphics[width=\columnwidth]{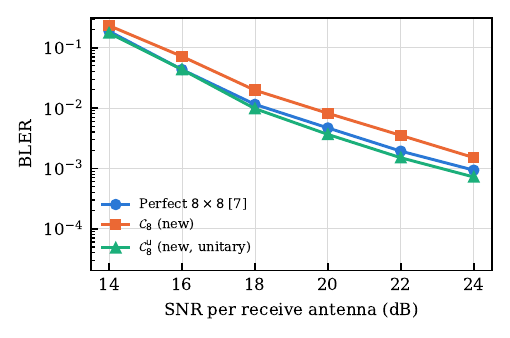}
\caption{BLER of $8\times8$ codes with $4$-QAM, $N_r=8$, K-best ($K=1024$). Top: Kronecker-correlated Rayleigh, correlation $0.7$. Bottom: imperfect channel knowledge, $\sigma_e^2=1/\rho$.}\label{fig:bler8b}
\end{figure}

\subsection{Error performance of the $8\times2$ codes (exact ML)}
Fig.~\ref{fig:bler82} shows the rate-2 subcodes of Proposition~\ref{prop:punct} with $N_r=2$, decoded by exact ML (32 real dimensions) down to BLER $8\cdot10^{-6}$.
\begin{itemize}
\item $\mathcal{C}_8^{\times2}$ remains $0.18$--$0.23$~dB behind the punctured perfect code between BLER $10^{-2}$ and $10^{-4}$; at $16$~dB the two codes reach $1.3\cdot10^{-5}$ and $8.0\cdot10^{-6}$. At $19$~dB none of the three codes produced an error in $3\times10^6$ codewords, and a crossing below a BLER of about $3\cdot10^{-7}$ cannot be resolved by simulation. This behaviour agrees with Table~\ref{tab:spectrum}: the $5$ times larger minimum determinant of $\mathcal{C}_8^{\times2}$ is offset by its larger low-determinant multiplicity, and \eqref{eq:ub} places the crossing far below any simulable error rate.
\item The unitary variant is $0.5$--$0.7$~dB behind the punctured perfect code.
\item The sphere decoder visits $1.7$--$1.9$ times more nodes for $\mathcal{C}_8^{\times2}$ than for the punctured perfect code, because its generator is not orthogonal.
\end{itemize}

\begin{figure}[t]
\centering
\includegraphics[width=\columnwidth]{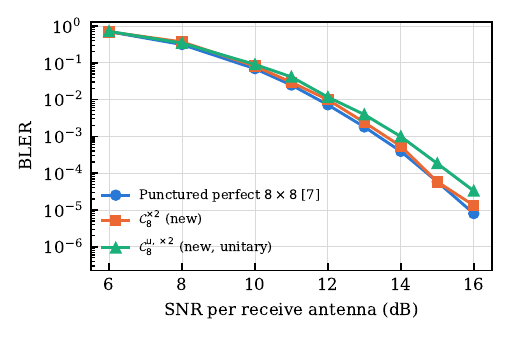}
\caption{BLER of rate-2 punctured $8\times8$ codes in an $8\times2$ system, $4$-QAM, i.i.d.\ Rayleigh, exact ML (sphere) decoding.}\label{fig:bler82}
\end{figure}

\subsection{Summary}
At the algebraic level, the new codes provide what the earlier transcendental codes could not: the NVD property, hence DMT optimality (Corollary~\ref{cor:dmt}), and provable determinant floors; in addition, $\mathcal{C}_8$ has the largest proved minimum determinant among the $8\times8$ codes compared. In simulation, $\mathcal{C}_4$ outperforms the $4\times4$ perfect code by about $0.4$--$0.5$~dB under exact ML decoding in all three channel models, and $\mathcal{C}_8^{\mathrm u}$ outperforms the $8\times8$ perfect code by $0.3$--$0.7$~dB at BLER $10^{-2}$--$10^{-3}$ under near-ML decoding, with the curves meeting near $5\cdot10^{-4}$ in i.i.d.\ fading. The transcendental codes fall behind at high SNR, by up to $0.43$~dB in correlated $4\times4$ fading. The asymptotic advantages of $\mathcal{C}_8$ and $\mathcal{C}_8^{\times2}$ are not visible down to BLER $2.5\cdot10^{-4}$ and $8\cdot10^{-6}$, respectively, exactly as the determinant-spectrum analysis predicts.

\section{Conclusion}\label{conclusion}
Non-cyclic crossed products can improve on cyclic perfect codes only if their structure constants are algebraic and the commuting-generator construction is abandoned. Theorems~\ref{thm:commuting} and \ref{thm:local} make this statement precise and identify the admissible Galois groups over $\Q(i)$, and Theorem~\ref{thm:centralizer} with Lemma~\ref{lem:tame} provides a practical division test. The twisted construction of Section~\ref{sec:construction} admits the explicit codeword form \eqref{eq:block}, a quaternion-like arrangement of cyclic codewords. With these tools we obtained an NVD, information-lossless $4\times4$ code and an infinite family of NVD, DMT-optimal $8\times8$ codes. The optimal member $\mathcal{C}_8$ has a minimum determinant at least $34.7$ times that of the $8\times8$ perfect code, and its rate-2 subcode has at least $5$ times that of the punctured perfect code. The determinant-spectrum analysis shows, and the simulations confirm, that these advantages are asymptotic. At practical error rates the low-determinant multiplicity governs performance: $\mathcal{C}_4$ outperforms the $4\times4$ perfect code by $0.4$--$0.5$~dB under exact ML decoding, $\mathcal{C}_8^{\mathrm u}$ outperforms the $8\times8$ perfect code by up to $0.7$~dB under near-ML decoding, and $\mathcal{C}_8^{\times2}$ remains within $0.25$~dB of the punctured perfect code down to BLER $10^{-5}$.

Several problems remain open:
\begin{itemize}
\item optimizing the multiplicity $A_4$ within these families through the choice of ideal, basis and $b$;
\item NVD $\Z/2^k\times\Z/2$ codes for $n=16$ (Corollary~\ref{cor:Z23} excludes $(\Z/2)^4$, but not $\Z/8\times\Z/2$);
\item fast-decodable orderings of the punctured codes, to be compared with \cite{SrinathRajan2014,VHO2012};
\item discriminant-optimal orders in the sense of \cite{HollantiLahtonenLu2008,VHLR2009}.
\end{itemize}

\section*{Reproducibility}
All generator matrices, the exact sphere decoder, a GPU-batched K-best decoder, and the scripts that produce every number and figure in this paper are available from the authors.

\end{document}